\theoremstyle{plain}
\newtheorem{proposition}[theorem]{Proposition}
\let\phi=\varphi
\title{Loopless Gray Code Enumeration and the Tower of Bucharest}
\author[1]{Felix Herter}
\author[1]{G\"unter Rote}
\affil[1]{Institut f\"{u}r Informatik, Freie Universit\"{a}t Berlin\\
  Takustr. 9, 14195 Berlin, Germany\\
  \texttt{avealx@zedat.fu-berlin.de},
  \texttt{rote@inf.fu-berlin.de}}
\authorrunning{F. Herter and G. Rote} 
\subjclass{F.2.2 Nonnumerical Algorithms and Problems}
\keywords{Tower of Hanoi, Gray code, enumeration, loopless generation}
\begin{document}

\maketitle

\begin{abstract}
  We give new algorithms for generating all $n$-tuples over
  an alphabet of $m$ letters, changing only one letter at a time (Gray
  codes).  These algorithms are based on the connection with 
  variations of the Towers of Hanoi game.  Our algorithms are
  loopless, in the sense that the next change can be determined in a
  constant number of steps, and they can be implemented in hardware.
  We also give another family of loopless algorithms that is based on
  the idea of working ahead and saving the work in a buffer.

\end{abstract}

\tableofcontents\markboth{Felix Herter and G\"{u}nter Rote}{Loopless Gray Code Enumeration and the Tower of Bucharest}

\section{Introduction: the binary reflected Gray code and
 the 
 Towers of Hanoi}
\label{sec:intro}

\subsection{The Gray code}
The Gray code, or more precisely,
the reflected binary Gray code $G_n$,
 orders the $2^n$ binary strings of length $n$
in such a way that successive strings differ in a single bit.
It is
defined inductively as follows, see Figure~\ref{gray4} for an example.
The Gray code  $G_1=0,1$, and if
$G_n=C_1,C_2,\ldots,C_{2^{n}}$ is the Gray code for the bit strings of
length $n$, then 
\begin{gather}
\label{def-gray}
G_{n+1} = 0C_1,0C_2,\ldots ,0C_{2^{n}},\ 
1C_{2^{n}},
1C_{2^{n}-1},\ldots,1C_2,1C_1
.
\end{gather}
In other words, 
we prefix each word of $G_n$ with 0, and
this is followed by the reverse of $G_n$ with 1 prefixed to each word.
\begin{figure}[htb]
\centering
$
\ifvoid0\else\setbox2=\box0\fi
\def\mu#1 {\count2=#1 \count0=#1 \mau}
\def\mau#1 {%
\ifx\relax#1%
 \let\next=\relax
 \ifvoid 0\else
   \box0 
 \fi
\else
 \let\next=\mau
 \ifnum\count0=0
    \box0 
    { \vrule\ }%
    \count0=\count2
 \fi   
 \advance \count0 by -1
 \ifvoid 0 \setbox0=\vbox{\vskip 4pt}%
 \else
 \fi
 \setbox0=\vtop{\unvbox0\hbox{#1}\vskip 4pt}%
\fi
\next}%
\vcenter{
\vskip 2pt
\hbox{\mu 13
000000
000001
000011
000010
000110
000111
000101
000100
001100
001101
001111
001110
001010
001011
001001
001000
011000
011001
011011
011010
011110
011111
011101
011100
010100
010101
010111
010110
010010
010011
010001
010000
110000
110001
110011
110010
110110
110111
110101
110100
111100
111101
111111
111110
111010
111011
111001
111000
101000
101001
101011
101010
101110
101111
101101
101100
100100
100101
100111
100110
100010
100011
100001
100000
{\relax}
}\vskip 2pt}
\quad\vrule\quad
\vcenter{
\hbox{\mu 13
0000
0001
0002
0012
0011
0010
0020
0021
0022
0122
0121
0120
0110
0111
0112
0102
0101
0100
0200
0201
0202
0212
0211
0210
0220
0221
0222
1222
1221
1220
1210
1211
1212
1202
1201
1200
1100
1101
1102
1112
1111
1110
1120
1121
1122
1022
1021
1020
1010
1011
1012
1002
1001
1000
2000
2001
2002
2012
2011
2010
2020
2021
2022
2122
2121
2120
2110
2111
2112
2102
2101
2100
2200
2201
2202
2212
2211
2210
2220
2221
2222
{\relax}
}}$
  \caption{The binary Gray code $G_6$ for 6-tuples
and the ternary Gray code for 4-tuples.}
  \label{gray4}
\end{figure}

\subsection{Loopless algorithms}
The Gray code has an advantage 
 over alternative algorithms for
enumerating the binary strings, for example in lexicographic order:
 one can change a binary string $a_na_{n-1}\ldots a_1$ to the successor
in the sequence by a single update of the form $a_i := 1-a_i$ in
constant time.
However, we also have to \emph{compute} the position $i$ of the bit
which has to be updated.
A~straightforward implementation of the recursive definition
\eqref{def-gray}
 leads to an algorithm with an optimal overall runtime of
$O(2^n)$, i.e., constant average time per enumerated bit string.

A stricter requirement is to compute each successor string in
constant \emph{worst-case} time. Such an algorithm is called a
\emph{loopless} generation algorithm.
Loopless enumeration algorithms for various combinatorial structures were pioneered by Ehrlich~\cite{Ehrlich},
and different loopless algorithms for Gray codes are known, see
Bitner, Ehrlich, and Reingold~\cite{Bitner} and Knuth
\cite[Algorithms 7.2.1.1.L and 7.2.1.1.H]{kn4}.
These algorithms achieve
constant running time by maintaining additional pointers. 

\subsection{The Tower of Hanoi}

The Tower of Hanoi is the standard textbook example for illustrating
the principle of recursive algorithms.  It has $n$ disks
$D_1,D_2,\dots,D_{n}$ of increasing radii and three pegs
$P_0,P_1,P_2$, see Fig.~\ref{hanoi}.
The goal is to move all disks from the peg $P_0$, where they initially
rest, to another peg, subject to the following rules:
\begin{enumerate}
\item Only one disk may be moved at a time:
the topmost disk from one peg can be moved on top of the disks of another peg
\item A disk can never lie on top of a smaller disk.
\end{enumerate}

\begin{figure}[htb]
  \centering
 \includegraphics[width=\textwidth]{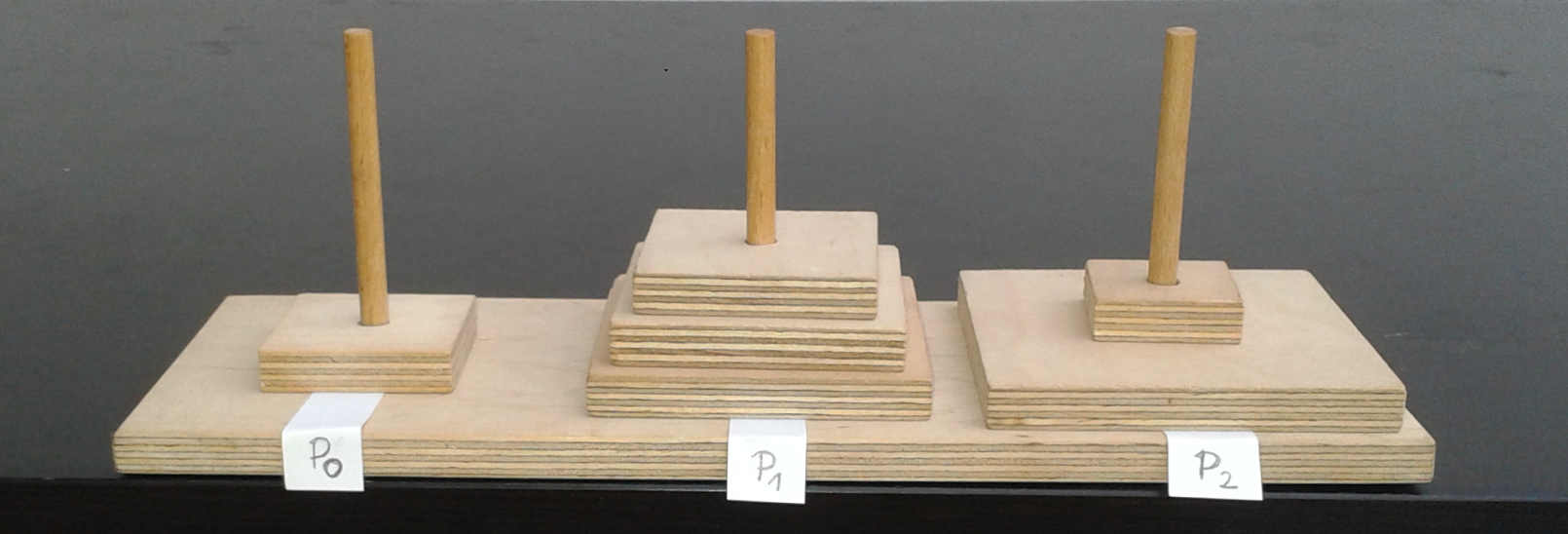}










  \caption{The Towers of Hanoi with $n=6$ (square) disks.
When running the algorithm HANOI from Section~\ref{sec:binary},
the configuration in this picture occurs together with
the bit string 110011.
(There is no easy relation between the positions of the disks and this bit string.)
 The next disk to move is $D_1$; it moves
clockwise to peg $P_0$, and the last bit is complemented.
The successor in the Gray code is
the string
110010. 
After that, $D_1$ pauses for one step, while
 disk $D_3$ moves, again clockwise, from $P_1$ to $P_2$, and
the third bit from the right is complemented, leading to
the string
110110. 
}
  \label{hanoi}
\end{figure}

For moving a tower of height $n$, one has to move
disk $D_n$ at some point.
But before moving disk $D_n$ from peg $A$ to $B$, one has to move the disks
$D_1,\ldots,D_{n-1}$, which lie on top of $D_n$, out of the way, onto
the third peg. After moving $D_n$ to $B$, these disks have to be moved from the third
peg to $B$. This reduces the problem for a tower of height $n$ to two
towers of height $n-1$, leading to the following recursive procedure.
\begin{tabbing}
  \qquad\=\+
\textit{move\_tower\/}($k,A,B$): (Move the $k$ smallest disks $D_1\ldots D_k$
from peg $A$ to peg $B$)\\
  \qquad\=\+
\textbf{if} $k\le0$: \textbf{return}\\
$\textit{auxiliary} := 3-A-B$; (\textit{auxiliary} is the third peg,
  different from $A$ and $B$.)\\
$\textit{move\_tower\/}(k-1,A,\textit{auxiliary})$\\
move disk $D_k$ from $A$ to $B$\\
$\textit{move\_tower\/}(k-1,\textit{auxiliary},B)$\\
\end{tabbing}

\pagebreak
\subsection{Connections between the Towers of Hanoi and Gray codes}
\label{delta}
The \emph{delta sequence} of the Gray code is the sequence
$1,2,1,3,1,2,1,4,1,2,1,\ldots$ of bit positions that are updated.
(In contrast to the usual convention, we number the bits starting from~1.)
This sequence has an obvious recursive structure which results from
\eqref{def-gray}.
It 
also describes the number of changed bits when
incrementing from $i$ to $i+1$ in binary counting.
Moreover,
it is easy to observe that the same sequence also describes the disks that
are moved by the recursive algorithm
\textit{move\_tower} above.
It has thus been noted that the Gray code $G_n$ can be used to solve the 
 Tower of Hanoi puzzle, cf.\
Scorer, Grundy, and Smith~\cite{some-binary} or
Gardner~\cite{gardner}.
In the other direction,
the Tower of Hanoi puzzle can be used to generate the Gray code $G_n$,
see
Buneman and Levy~\cite{buneman-levy}.

Several loopless ways to compute the next move for the Towers
of Hanoi are known, and they lead directly to loopless algorithms
for the Gray code.
We describe one such algorithm.

\subsection{Loopless Tower of Hanoi and binary Gray code}
\label{sec:binary}
From the recursive algorithm
\textit{move\_tower}, it is not hard to derive the following
fact.
\begin{proposition}
If the tower should be moved from $P_0$ to $P_1$ and
$n$ is odd,
or if the tower should be moved from $P_0$ to $P_2$ and
$n$ is even,
the moves of the odd-numbered disks always proceed in forward (``clockwise'')
circular direction: $P_0\to P_1\to P_2\to P_0$, and
the even-numbered disks always proceed in the
opposite 
circular direction: $P_0\to P_2\to P_1\to P_0$.
\qed
\end{proposition}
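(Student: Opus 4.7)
The plan is to prove the proposition by induction on $n$, but only after strengthening the statement so that every recursive subcall of \textit{move\_tower} is covered. Specifically, I would establish the following generalized claim: in any call \textit{move\_tower}$(k,A,B)$, if the parity of $k$ matches the direction of $A\to B$ --- meaning $k$ is odd exactly when $B\equiv A+1\pmod 3$, and $k$ is even exactly when $B\equiv A-1\pmod 3$ --- then during the execution every odd-numbered disk moves forward ($P_i\to P_{i+1\bmod 3}$) and every even-numbered disk moves backward. The two cases of the proposition are the instances at $k=n$, $A=P_0$, with $B=P_1$ (for odd $n$) or $B=P_2$ (for even $n$).

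The base case $k=1$ is immediate, since the single move of $D_1$ from $A$ to $B$ is by hypothesis in the forward direction. For the inductive step I would inspect the three actions of \textit{move\_tower}$(k,A,B)$ separately. The middle step, moving $D_k$ from $A$ to $B$, matches the claim directly by the parity-direction hypothesis. For the two recursive subcalls, the key observation --- verified by a quick case check on the six ordered pairs $(A,B)$, or equivalently by computing $\mathrm{aux}=3-A-B$ modulo $3$ --- is that both $A\to\mathrm{aux}$ and $\mathrm{aux}\to B$ proceed in the direction opposite to $A\to B$. Since $k-1$ and $k$ have opposite parities, the parity-direction compatibility is preserved into each subcall, and the induction hypothesis then delivers the desired behavior for every disk $D_1,\dots,D_{k-1}$.

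The main obstacle I anticipate is conceptual rather than computational: it lies in finding the right generalization that lets the induction thread through the recursive calls, because the plain statement of the proposition only covers two starting configurations and does not seem to be self-strengthening. Once the claim is phrased in terms of the compatibility between the parity of $k$ and the direction of $A\to B$, the remaining verification --- that the direction flips in each subcall --- is a one-line arithmetic fact modulo $3$, and the rest of the argument is entirely mechanical.
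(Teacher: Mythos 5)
Your proof is correct: the strengthened invariant (parity of $k$ compatible with the direction of $A\to B$), together with the observation that both subcalls reverse the direction while decrementing $k$, makes the induction go through, and the two stated cases are exactly the compatible top-level instances. The paper omits the proof entirely (it asserts the proposition with a closing \qed as ``not hard to derive'' from \textit{move\_tower}), and your induction on the recursive structure is precisely the intended derivation.
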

In the other case, when the assumption does not hold, the directions are simply swapped.
Since we are interested not in moving the tower to a specific target
peg, but in generating the Gray code, we stick with the 
proposition as stated.
\begin{tabbing}
    \quad\=\+
\emph{Algorithm} \textbf{HANOI}. Loopless algorithm for the binary
Gray code.\\
    \quad\=\+
Initialize:
Put all disks on $P_0$.\\
    \textbf{loop}:\\
    \qquad\=\+
    Move $D_1$ clockwise.\\ 
Let $D_k$ be the smaller of the topmost disks on the two pegs that
don't carry $D_1$.
\\
If there is no such disk, terminate.
\\ Move $D_k$ clockwise if $k$ is odd; otherwise, move it
counterclockwise.
\end{tabbing}
To obtain the Gray code, we simply set
 $a_k := 1-a_k$ whenever we move the disk $D_k$~\cite{buneman-levy}.
See  Fig.~\ref{hanoi} for a snapshot of the procedure.

We would not need the clockwise/counterclockwise rule for $D_k$: Since
we must not put $D_k$ on top of $D_1$, there is anyway no choice of
where to move it~\cite{buneman-levy}.
We have chosen the above formulation
since it is better suited for generalization.

\subsection{Overview}
In the remainder of this paper, we will generalize these connections
to Gray codes for larger radixes (alphabet sizes).
Section~\ref{ternary} is devoted to ternary Gray codes and their
connections to the so-called \emph{Towers of Bucharest}.
After defining Gray codes with general radixes in Section~\ref
{general},
we extend the ternary algorithm from
Section~\ref{ternary} to arbitrary odd radixes $m$
in Section~\ref{sec:odd}, and even to mixed (odd) radixes
(Section~\ref{odd-compressed}).
In
Section~\ref{even}, we generalize the binary Gray code
algorithm of 
Section~\ref{sec:binary} to arbitrary even $m$.
Finally, in Section~\ref{sec:ahead}, we develop loopless algorithms
bases on an entirely different idea of ``working ahead'' that is related to converting
amortized running-time bounds to worst-case bounds.
In the appendix, we give prototype code for simulating all our algorithms
in \textsc{Python}.


\pagebreak
\section{Ternary Gray codes and the Towers of Bucharest}
\label{ternary}
A ternary Gray code enumerates the $3^n$ $n$-tuples $(a_n,\ldots,a_1)$ with
$a_i\in\{0,1,2\}$. Successive tuples differ in one entry, and in this entry
they differ by $\pm1$.

The following simple variation of the Towers of Hanoi will yield
a ternary Gray code ($m=3$): \emph{We disallow the direct movement of
a disk between pegs $P_0$ and $P_2$}: a disk can only be moved to an
adjacent peg. We call this the Towers of Bucharest.{\footnote
{%
%
%
The custom of naming variations of the Tower of Hanoi game after
different cities,
instead of using ordinary names such as ``three-in-a-row''~\cite
{three-in-a-row},
 has a long tradition.
The name ``Towers of Bucharest'' has been suggested by G\"unter
M. Ziegler. Several legends rank themselves around these towers.
\skip0=\hangindent

\leftskip=\skip0

A little count from Transylvania had conquered the whole country
 and
had become a powerful Lord.
In order to celebrate his glory,
he built a magnificent palace in the capital city Bucharest,
 and he suppressed his people as best he could.
He also had a dog named Heisenberg.
In a nearby monastery, the monks had golden disks of different
sizes on three pegs, and they had played the Towers of Hanoi
game for centuries. It was already forseeable that the game was drawing towards its conclusion.
According to an ancient prophecy,
the palace of the ruler of the country would crumble
and his rule would come to an end when the game would be finished.
When the count, who called himself king by this time, heard this
story,
he did not like it.
\emph
{First} of all, he had the monks beheaded and told them to do some useful
job instead.  \emph{Secondly}, he removed the pegs with the discs and took
them to his palace. He made sure that they were placed very far away from each other: The first peg was
put in the South wing, 
the second peg in the North wing, and the third peg again in the South wing.
One may wonder why he did not place them in some more logical arrangement
like
South-South-North or South-North-North, or perhaps
 North-middle-South
or South-middle-North.
The reader will soon understand that this placement was a clever decision when
she or he learns
what else he did.
The North wing could only be reached from the South wing
through the middle wing, or by going out on the street and reentering
on the other side, but I don't think it is very wise to go into
the street carrying a heavy golden disk.

Anyway, his \emph{third} action was his
most wicked and smartest move:
%
%
It occurred to him that \emph{he was powerful and he was the ruler, and
 he therefore had  the power to change the rules}.
 He decreed that the discs can only be moved
between the first and the second peg or between the second and the
third peg. Direct moves between the first and third peg were henceforth forbidden.
This would delay the moves of
\parfillskip0pt
 
}}

\begin{figure}[htb]
  \centering
\noindent  \includegraphics[page=1]{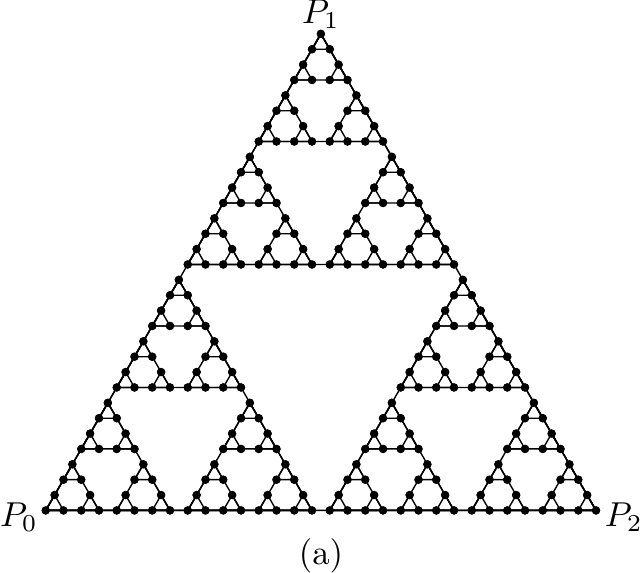}\hfill
  \includegraphics[page=2]{sierpinski}
  \caption{The state graphs of (a) the Tower of Hanoi and (b)~the Tower of
    Bucharest with $
5$ disks}
  \label{fig:state}
\end{figure}

Figure~\ref{fig:state}
 shows the state space of the Towers of
Bucharest in comparison with the Towers of Hanoi. 
In accordance with this figure,
we can make the
following easy observations:
\begin{proposition}
\label{structure}
  \begin{enumerate}
  \item 
In the towers of Hanoi,
there are three possible moves from any position,
except when the whole tower is on one peg: In these cases,
  there are only two possible moves.
  \item 
\label{bucharest.structure}
In the towers of Bucharest,
there are two possible moves from any position,
except when the whole tower is on peg $P_0$ or $P_2$:
 In those cases,
  there is only one possible move.
  \end{enumerate}
\end{proposition}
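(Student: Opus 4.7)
The plan is a direct case analysis on the three topmost disks. First I would record the rules of legality in a uniform way: a move from $P_i$ to $P_j$ is legal exactly when $P_i$ is non-empty, the top disk of $P_i$ is strictly smaller than the top of $P_j$ (treating an empty peg as a top disk of size $\infty$), and, in the Bucharest case only, $P_i$ and $P_j$ are adjacent. The topmost disks of distinct non-empty pegs are pairwise distinct, so I can order them by size; call the three sizes $a < b < c$, allowing $c = \infty$ when a peg is empty and both $b = c = \infty$ when two pegs are empty. The disk of size $c$ is never moveable, so every legal move originates from the disks of sizes $a$ or~$b$.

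For the Hanoi count, the disk of size $a$ can go to either of the other two pegs (their top disks are strictly larger or they are empty), contributing $2$ moves. The disk of size $b$, when it exists, can be placed only on the peg carrying $c$, contributing $1$ more move, for a total of~$3$. The disk of size $b$ is absent exactly when $b = c = \infty$, which means two pegs are empty and the whole tower sits on the third; in that case only the smallest disk's $2$ moves remain, giving the stated exception.

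For the Bucharest count, the adjacency constraint interacts with the ranking by size. If the smallest topmost disk sits on the middle peg $P_1$, it has two neighbours and contributes $2$ legal moves; the disk of size $b$, if present, sits on $P_0$ or $P_2$, whose only neighbour is $P_1$, which is blocked by the smallest disk—so it contributes $0$. If the smallest topmost disk sits on $P_0$, it has only $P_1$ as neighbour and contributes $1$ move; the disk of size $b$, when present, sits on $P_1$ or $P_2$, and a short check (using $a<b<c$ and the fact that the remaining peg's top is either larger than $b$ or empty) shows that in either sub-case it has exactly one legal adjacent destination, contributing $1$ more move, for a total of~$2$. The disk of size $b$ is absent precisely when $P_1$ and $P_2$ are both empty, i.e., the tower is entirely on $P_0$; then only the single move of the smallest disk survives. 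The case of $P_2$ is symmetric.

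The only part of the argument that is not completely mechanical is the small sub-case verification in the Bucharest count for the disk of size $b$; everything else follows from ordering the three topmost disks and counting. I expect this little verification, together with keeping track of when $b$ or $c$ is infinite, to be the main bookkeeping point.
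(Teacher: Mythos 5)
Your proof is correct and follows essentially the same route as the paper: both arguments observe that only the smallest and second-smallest topmost disks can ever move, count the destinations of each, and note that the exceptional counts occur exactly when the second-smallest topmost disk does not exist. Your $a<b<c$ bookkeeping with $\infty$ for empty pegs is just a more explicit rendering of the paper's ``$D_1$ plus the smaller of the topmost disks on the other pegs'' case analysis.
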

\begin{proof}
  \begin{enumerate}
  \item 
  The disk $D_1$ can be moved to any of the other pegs (two possible
  moves). In addition, the smaller of the topmost disks on the other
  pegs (if those pegs aren't both empty) can be moved to the other peg which
  is not occupied by $D_1$.
\item 
If the disk $D_1$ is in the middle, it
 can be moved to any of the other pegs, but no other move is possible.
If the disk $D_1$ is on $P_0$ or $P_2$, it has only one possible move,
and the smaller of the topmost disks on the other
  pegs (if those pegs aren't empty) also has one possible move,
  similarly as above.
\qedhere
  \end{enumerate}
\end{proof}
Both games have the same set of $3^n$ states, corresponding to the
possible ways of
assigning each disk to one of the pegs $P_0,P_1,P_2$. 
The nodes in the corners marked $P_0,P_1,P_2$ represent the states
where all disks are on one peg.
The graph of the Towers of Hanoi
in Figure~\ref{fig:state}a
 approaches the
Sierpi\'nski gasket. The optimal path of length $2^n-1$ is the
straight
path from  $P_0$ to $P_2$.
(The directions of the edges in this drawing of the state graph are not directly
related to the pegs that are involved in the exchange,
and the relation between a state and its position on the drawing is
not straightforward.)
By contrast, we see that the graph of the Towers of Bucharest in
Figure~\ref{fig:state}b is a single path through all nodes.
{
\makeatletter
\long\def\@makefntext#1{%
\parindent 0pt\leftskip 1em
\leavevmode 
#1}
\def\footnotesize{\@setfontsize\footnotesize{8}{9}}\makeatother
\footnotetext{%
the disks, since they always had to be
carried all the way from the South wing to the North wing and back.
As shown in this article, the consequences of the new rule in
delaying the game are even more spectacular.

These measures were definitely overcautious,
in particular since nobody was there to move the discs any more,
and moreover, the pegs with the golden discs were carefully guarded.
Nevertheless,
he was worried that his wife and children would
wander around in the palace and play with the disks,
thereby setting the prophecy into motion again,
like in that movie, ``Jumanji'' with Robin Williams.
He was not sure how the guarding officers would behave in a conflict between
the loyalty to their orders and the authority of his family members.
You may draw your own conclusions, but in my opinion, this count, or
king if you wish, was
pretty paranoid. In the end, it served him nothing. He was
swept away by the revolution. What became of the golden disks? Nobody
knows. It is sometimes claimed that they were hidden in a subterranean cave,
and hobby archaeologists are still looking for them occasionally.
 But probably they 
 have
found their way to the black market.  Today, tourists that visit the
palace are led to a stump on the floor in the North wing, which is
supposedly the remainder of one of the pegs. The South wing is
closed for restoration.

Another story, even more unbelievable but no less bloody than the
first one,
puts the Towers of Bucharest in
the context of the legendary 
 caliph Harun-al-Rashid.
One night, the caliph was again wandering through the streets of
Baghdad, as usual dressed like an ordinary businessman, in order
to assure himself that the people were still loving his reign, admiring
his wisdom, and praising his justice. He noticed
a crowd of 
 lookers-on who were gathered around
a man and a woman sitting on the ground side by side, silently
and solemnly executing the moves of the Towers of Bucharest.  One of
them would pick up a disk and set it on an adjacent peg.  By the rules
of the game, there was always one of them for whom the two involved pegs were
easy to reach, and this was the one who carried out the move. Only on
the infrequent occasions when one of the larger and heavier disks had
to be moved, they helped each other.  The man
 wore a
 
}
}

Let us see why this is true.
By Proposition~\ref{structure}, 
this graph has maximum degree~2, and
 it follows that it must consist of a path between $P_0$ and $P_2$
 (the only degree-1 nodes), plus a number of disjoint cycles.
However, it is known that the path has length $3^n-1$ and does
therefore indeed go through all nodes. Since we will prove a more general
statement later
(Theorem~\ref{odd}), we only sketch the argument here:
Solving the problem recursively in an analogous way to the procedure
\textit{move\_tower}, we
reduce the problem of moving a tower of $n$ disks from $P_0$ to $P_2$
(or vice versa)
to three problem instances with
$n-1$ disks, plus two movements of disk $D_n$, and the
resulting recursion establishes that $3^n-1$ moves are required.

The states of the Towers of Bucharest correspond in a natural way to
the ternary $n$-tuples: The digit $a_i\in\{0,1,2\}$ gives the position
of disk $D_i$.  It follows now easily that the solution of the Towers of
Bucharest
yields a ternary Gray code: Since we can move only one disk at a time,
it means that we change only one digit at a time, and by the special
rules of the Towers of Bucharest, we change it by $\pm1$.
(This connection has apparently not been made before.)
In fact, the algorithm produces \emph{the} ternary reflected Gray
code,
which we are about to define
below 
in Section~\ref{general}; see also
Theorem~\ref{odd}. 

Moreover, since there are only two possible moves, one just has to
always choose the move which does not undo the previous move, and this
leads to a very easy loopless Gray code enumeration algorithm.

It is remarkable that ternary Gray codes
 can be
generated on the same hardware as binary Gray codes~(Fig.~\ref{hanoi}).
In the context of 
 generating the ternary Gray code, the 
Gray code string can be directly read off the disks.
For example,
the configuration in
Fig.~\ref{hanoi}
represents the string
211102. 
It is $D_1$'s turn to move, and
the disk $D_1$ will make two steps to the left, generating the strings
211101 
and
211100, 
and pauses there for one step, while
 disk $D_3$ moves to the right, leading to the string
211200, 
etc.

\section{Gray codes with general radixes}
\label{general}
An $m$-ary Gray code enumerates the $n$-tuples $(a_n,\ldots,a_1)$ with
$0\le a_i<m$, changing a single digit at a time by $\pm1$. 
The $m$-ary reflected Gray code can be recursively described as
follows:
Let $C_1,C_2,\ldots,C_{m^{n}}$ be the Gray code for the strings of
length $n$. Then the strings of length $n+1$ are generated in the
order
\begin{equation}
\label{m-ary}
  \begin{array}[c]{l}
  C_10,C_11,C_12,\ldots , C_1(m-2),C_1(m-1),\ \
  C_2(m-1),C_2(m-2),\ldots , C_22,C_21,C_20,\\
  C_30,C_31,C_32,\ldots , C_3(m-2),C_3(m-1),\ \
  C_4(m-1),C_4(m-2),\ldots , C_42,C_41,C_40,\\
  C_50,C_51,C_52,\ldots , C_5(m-2),C_5(m-1),\ \
  \ldots
  \end{array}
\end{equation}
Each digit alternates between an upward sweep from $0$ to $m-1$
and a return sweep from $m-1$ to $0$.
{
\makeatletter
\long\def\@makefntext#1{%
\parindent 0pt \leftskip 1em
\leavevmode 
#1}
\def\footnotesize{\@setfontsize\footnotesize{6.5}{7.4}}\makeatother
\footnotetext{%
\tiny
 cowboy hat,
 and the woman was in her bikini.  After all, it might have been the Towers
of Hanoi that they played. Some witnesses have later reported that they had
seen a disc jumping between the first and the third peg, but this has
never been conclusively confirmed.

May that as it be, something unexpected happened.  As the khaliph was
engrossed in watching the spectacle 
 and drew 
 a bit closer, a
small door in the wall beside him opened, which he had not noticed
before.  It 
gave 
 onto a small garden.
The moon
had risen over the rooftops, and her light gave a sort-of surreal
atmosphere to the whole scene. 
In the middle of
the garden,
at the corner of a fountain,
 a woman sang, accompanying herself on the lute.  She had a
beautiful voice, a bit like Mariah Carey or Adele.  
 The calif would have listened longer,
 but he was quickly escorted into a house, where a maid-servant took
charge of him and handed him a black gown.  ``Hurry up, you are late.
We were waiting only for you!''  The gown covered his whole stature
and hid his face, and he entered a room that was barely lit by an open
fire. Seven other men 
 in 
 black gowns were already sitting on small
stools in a circle around the fire. One stool was free, and he
sat down.
Beside the fire, there was a small ivory model of the Towers of Bucharest, with the four largest of the $n=6$ disks
 on the final peg. Disc~1 was on the middle peg, and disk~2 was on the
 first peg.  The kaliph, having watched the game just before, understood immediately what that meant.
 Nowbody said a word. The tension rose.
After six minutes, a lady entered and addressed them.  She was the singer from
the fountain.  ``Gentlemen. You have sworn to come to my rescue when
I would be in need.  Now the time has come to fulfill your oath.
%
You see seven discs of different sizes.
He 
who will draw the smallest disk
%
%
will bring me the head of the detestable 
 caliph
Harun-al-Rashid
(ca.~763--809).  Should he fail to fulfill this task, the other
eight will kill \emph{him}, and we will come together and draw again.''  With these
words,
she dropped the discs into
a chalice.
 In silence, each man picked a
disk.
 The kaliph was last to draw.
As he opened his hand, sure enough, he found the
smallest disc, disc number~1.
He rose and said:
``Fair lady, I will fulfill your order as I have promised.
But pray tell me:
by which deeds or words has the kalif 
enraged you so much that
you wish him to
\break
 
}
}

\section{Generating the $m$-ary Gray code with odd $m$}
\label{sec:odd}

For odd $m$, the ternary algorithm can be generalized.
We need $m$ pegs $P_0,\ldots,P_{m-1}$. The leftmost peg $P_0$ and the
rightmost peg $P_{m-1}$ play a special role. The other pegs are called
the \emph{intermediate pegs}.
\begin{tabbing}
    \quad\=\+
\emph{Algorithm} \textbf{ODD}. Generation of the $m$-ary Gray code for odd $m$.\\
    \quad\=\+
Initialize:
Put all disks on $P_0$.\\
    \textbf{loop}:\\
    \quad\=\+
    Move $D_1$ for $m-1$ steps, from $P_0$ to $P_{m-1}$ or vice versa.
\\
Let $D_k$ be the smallest of the topmost disks on the $m-1$ pegs that
don't carry $D_1$.
\\
If there is no such disk, terminate.
\\ Move $D_k$ by one step:\\
    \quad\=\+
If $D_k$ is on $P_0$ or $P_{m-1}$, there is only one possible direction where
to go.\\
Otherwise, the disk $D_k$ continues in the same direction as in its last move.
\end{tabbing}

Figure~\ref{fig:klagenfurt} shows an example with $m=5$. The game with
5 pegs is called the Tower of Klagenfurt, after the birthplace of the
senior author.\footnote{%
When 
 Klagenfurt was founded, it was surrounded by a swamp.
The swamp was inhabited by a dinosaur, the so-called \emph{Lindworm}.
The Lindworm would regularly come to the city and eat 
citizens. Occasionally, she would devour one of the towers of the city.
The coat of arms of Klagenfurt shows the Lindworm dragon 
in front of
 the only
remaining tower,
see
Figure~\ref{fig:klagenfurt}.
 (Initially, there were five towers.) Over the centuries,
the swamp has been drained, and the
Lindworm is practically extinct.
}

\begin{figure}[htb]
  \centering
$\vcenter{\hbox
{\includegraphics[width=0.67\textwidth]{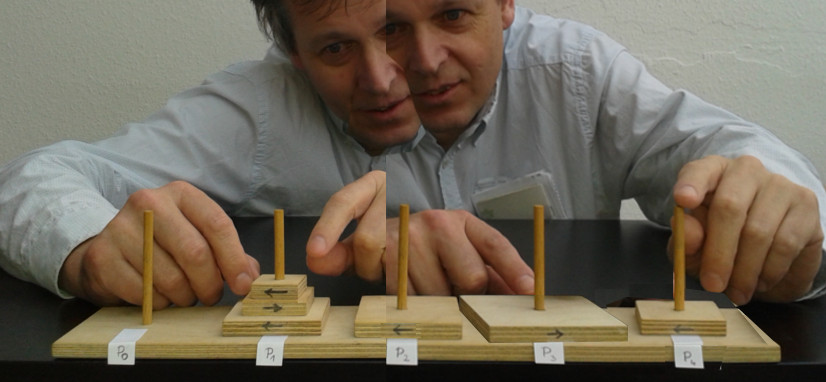}}}$
\hskip 8mm
$\vcenter{\hbox
{\includegraphics[scale=0.27]{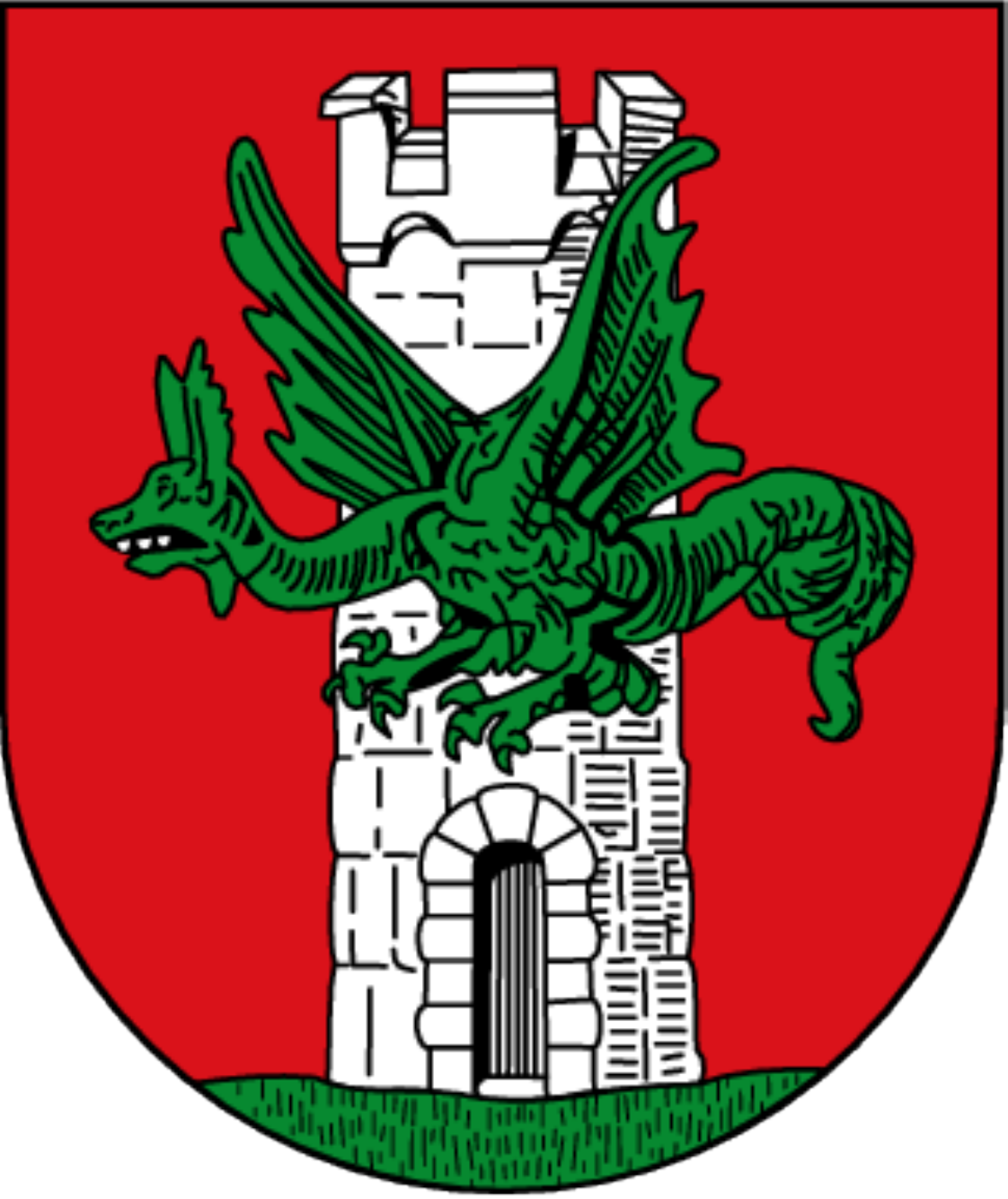}}}$




  
  \caption{The Towers of Klagenfurt.
This configuration represents the string
321411
over the radix 
$m=5$. 
The next step of the Gray code moves the smallest disk $D_1$ onto peg $P_0$, changing the string to
321410.
After that, disk $D_2$  moves from $P_1$ to $P_2$ and the next string is
321420. 
 In the background, the two-headed Lindworm monster.
}
  \label{fig:klagenfurt}
\end{figure}

In this procedure, the movement of $D_1$ is ``externally given'',
whereas the movement of the other disks, whenever $D_1$ is at rest, is somehow
``determined by the algorithm''.
It is not obvious that the algorithm does not put a larger disk on top
of $D_1$.
\begin{theorem}
\label{odd}
  Algorithm ODD generates the $m$-ary reflected Gray code defined
in~\eqref{m-ary}.
\end{theorem}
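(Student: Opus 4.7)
The plan is induction on $n$. The base case $n=1$ is a direct check: disk $D_1$ sweeps from $P_0$ to $P_{m-1}$, producing the sequence $0, 1, \dots, m-1 = G_1$, after which no non-$D_1$ peg holds any disk and the algorithm terminates. For the inductive step, I will exhibit the recursive structure of (\ref{m-ary}) inside the execution. The main loop of \textbf{ODD} alternates between a $D_1$-sweep (which produces $m-1$ monotone changes of the last digit $a_1$, matching a full row of (\ref{m-ary})) and a single ``large-disk'' move $M_i$ of some $D_k$ with $k \geq 2$ (corresponding to the transition between two consecutive rows). The $D_1$-sweeps never encounter an obstruction because $D_1$ is the smallest disk. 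The heart of the argument is to show that the sequence $M_1, M_2, \dots$ coincides with the sequence of moves that \textbf{ODD} would produce when applied to the $n-1$ disks $D_2, \dots, D_n$ alone. Combined with the inductive hypothesis this drives the $(a_n, \dots, a_2)$-projection of the state through the words $C_1, C_2, \dots$ of $G_{n-1}$ in order, while the interleaved $D_1$-sweeps fill in the back-and-forth sweep of $a_1$ prescribed by (\ref{m-ary}).

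To establish this match, I will maintain the invariant that just after the $i$-th $D_1$-sweep, (a) the configuration of $D_2, \dots, D_n$ coincides with what the $(n-1)$-disk algorithm would have produced at the corresponding point, and (b) $D_1$ sits on $P_{m-1}$ if $i$ is odd and on $P_0$ if $i$ is even. I will then do a case analysis on the next move that the $(n-1)$-algorithm would make. If it moves the smallest disk $D_2$, it suffices to show that $D_2$ is uncovered, so that it is selected as the smallest top disk on non-$D_1$ pegs. If it moves a larger disk $D_k$, it suffices to show that $D_2$ is covered by $D_1$ on the same endpoint peg, so that the set of non-$D_1$ pegs in the $n$-algorithm coincides with the set of non-$D_2$ pegs in the $(n-1)$-algorithm and both algorithms select the same $D_k$. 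The direction rule then matches automatically, since both algorithms use the same rule (forced at endpoints, continue otherwise).

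The main obstacle is the coverage analysis for the larger-disk case, and this is where the hypothesis that $m$ is odd enters. In the $(n-1)$-algorithm, a larger-disk move happens exactly when $D_2$ has just completed one of its sub-sweeps, at which moment $D_2$ rests on $P_0$ or $P_{m-1}$. In the $n$-algorithm this moment arises after $D_1$ has performed exactly $jm$ sweeps for some $j \geq 1$; since $m$ is odd, $jm$ has the same parity as $j$, so the endpoint where $D_1$ lands after its $jm$-th sweep matches the endpoint where $D_2$ sits after its $j$-th sub-sweep. This alignment, which would fail for even $m$, places $D_1$ and $D_2$ on the same peg precisely when needed, and the invariant propagates through the remainder of the execution.
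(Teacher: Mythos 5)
Your proof is correct and follows essentially the same route as the paper: induction on $n$, projecting out $D_1$ and showing that the remaining disks $D_2,\ldots,D_n$ execute Algorithm ODD with $D_2$ in the role of the externally controlled disk, with oddness of $m$ entering exactly where you place it (the parity of $jm$ versus $j$ aligning $D_1$'s landing peg with $D_2$'s resting peg). Your version is in fact slightly more explicit than the paper's about the sweep-counting and about why the two algorithms select the same disk $D_k$.
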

\begin{proof}
  It is clear from the algorithm that the last digit, which is controlled by the movement
  of $D_1$, changes in accordance with~\eqref{m-ary}.
We still have to show that when we discard the last digit and observe only
the movement of the disks $D_2,\ldots,D_n$, the algorithm produces
the Gray code for the strings of length $n-1$. This is proved by
induction.

By the rules of the algorithm, whenever $D_1$ rests, the disk that
moves is $D_2$, unless $D_2$ is covered by $D_1$.  Let us now observe
the motion pattern of $D_1$ and $D_2$ that results from this rule.  We
start with $D_1$ on top of $D_2$, say, on peg $P_0$, with $D_1$ about
to start its sweep. Whenever $D_1$ pauses for one step,
 $D_2$ will
make a step towards $P_{m-1}$. After $D_2$ reaches
 $P_{m-1}$,
 it turns out that, because $m$ is odd,
$D_1$ will make its next sweep from
$P_0$ to $P_{m-1}$, resting on top of $D_2$. Now, since $D_2$ is
covered, it will be one of the \emph{other} disks $D_3,D_4,\ldots$
that will move. Then the same routine repeats in the other direction.

If we now ignore $D_1$ and look only at the motions of the other
disks, the following pattern emerges:
$D_2$ makes $m-1$ steps from one end to the other, and then
the smallest disk that is not covered by $D_2$ makes its move,
according to the rules.
This is precisely the same procedure as Algorithm ODD, with $D_2$
taking the role of the ``externally controlled'' disk $D_1$,
and we have assumed by induction that this algorithm correctly produces the
 Gray code for the strings of length $n-1$, and it does not put a
 larger disk on top of $D_2$. Since the larger disks are moved only
 when
$D_2$ lies under $D_1$, it follows that a larger disk cannot be moved
on top of $D_1$ either.
\end{proof}

One can actually apply one induction step of the proof in the opposite direction, introducing
an additional ``control disk'' $D_0$ which does not have a digit
associated with it. Its only role is to alternately cover $P_0$ and
$P_{m-1}$ and exclude these pegs from the selection of the disk $D_k$
that should be moved.
The algorithm becomes simpler because it does not have to treat $D_1$
separately from the other disks. (See
Appendix~\ref{imple-odd-compressed},
where this idea is applied to the algorithm of
Section~\ref{odd-compressed} below).

\section{Generating the $m$-ary Gray code with even $m$}
\label{even}

For even $m$, we generalize 
Algorithm HANOI, which solves the case $m=2$.
We use $m+1$ pegs $P_0,\ldots,P_{m}$, which we arrange in a cyclic
clockwise order.
 We stipulate that
disks $D_i$ with odd $i$ move only clockwise, and 
disks with even $i$ move only counterclockwise.
\begin{tabbing}
    \quad\=\+
\emph{Algorithm} \textbf{EVEN}. Generation of the $m$-ary Gray code for even $m$.\\
    \quad\=\+
Initialize:
Put all disks on $P_0$.\\
    \textbf{loop}:\\
    \qquad\=\+
    Move $D_1$ for $m-1$ steps, in clockwise direction.
\\
Let $D_k$ be the smallest of the topmost disks on the $m$ pegs that
don't carry $D_1$.
\\
If there is no such disk, terminate.
\\ Move $D_k$ by one step, in the direction determined by the parity
of $k$.
\end{tabbing}
The Gray code is determined by changing the digit $a_i$ whenever disk
$D_i$ is moved. The digit $a_i$ runs through the sequence
$0,1,2,\ldots,m-2,m-1,m-2,\ldots,2,1,0,1,2,\ldots$. Thus it changes always
by $\pm1$, but we have to remember whether it is on the increasing or
the decreasing part of the cycle.
The position of 
disk $D_i$
 is no longer directly correlated
with the 
 digit $a_i$;
 thus the digits $a_i$ have to be maintained
separately, in addition to the disks on the pegs. It is far
from straightforward to relate the disk configuration to the Gray code.

For example, when carrying out the algorithm
for $m=4$, the configuration in Figure~\ref{fig:klagenfurt} appears when
the string is
211030. 
Disk $D_1$ has just made three steps and is going to rest for one step.
The next step moves $D_3$ clockwise, since $3$ is odd, and the string
is changed to
211130. 
After that, $D_1$ resumes its clockwise motion, and the string changes into
211131. 

\begin{theorem}
  Algorithm EVEN generates the $m$-ary reflected Gray code defined
in~\eqref{m-ary}.
\end{theorem}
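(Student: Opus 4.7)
The proof proceeds by induction on $n$, closely paralleling the proof of Theorem~\ref{odd} but with modular arithmetic adapted to the cyclic arrangement of $m+1$ pegs. For $n=1$, the single sweep of $m-1$ clockwise steps of $D_1$ produces $0,1,\dots,m-1=G_1$. For the inductive step, note that by the algorithm $D_1$ executes sweeps of $m-1$ clockwise moves separated by pauses during which exactly one other disk moves once. Because $a_1$ is updated by $\pm 1$ in the prescribed zigzag pattern $0,1,\dots,m-1,m-2,\dots,0,1,\dots$, each sweep corresponds to one block of last-digit changes in~\eqref{m-ary}. It therefore suffices to show that the sequence of ``pause moves'' produces the Gray code $G_{n-1}$ on the prefix $(a_n,\dots,a_2)$.

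The core combinatorial step is to track when $D_1$ covers $D_2$. Labelling the pegs $P_0,\dots,P_m$ cyclically, each sweep of $D_1$ shifts its position by $m-1\equiv -2\pmod{m+1}$, while each counterclockwise move of $D_2$ shifts $D_2$ by $-1$. Assuming no exceptional pause has yet occurred, just before pause $T$ one finds $D_1$ on peg $P_{-2T}$ and $D_2$ on peg $P_{-(T-1)}$ (indices mod $m+1$), so they coincide exactly when $T\equiv m\pmod{m+1}$. Hence $D_2$ is uncovered and is chosen by the algorithm at each of the pauses $1,\dots,m-1$, but at pause $m$ it is covered by $D_1$ and the algorithm instead selects the next smallest available disk on a peg not carrying $D_1$. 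A short check also shows that $D_2$'s target peg never coincides with $D_1$'s current peg, so no disk is ever placed on top of a strictly smaller one. Immediately after this exceptional pause the $D_1$--$D_2$ offset is again $0$ (since only a larger disk moved), so the same arithmetic repeats and the pause moves partition into blocks of $m-1$ consecutive $D_2$-moves separated by single moves of larger disks.

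This block structure is exactly the shape of Algorithm EVEN applied to the $n-1$ disks $D_2,\dots,D_n$ with $D_2$ in the externally driven role. The physical movements of the sub-algorithm form the mirror image of a standard instance (since the parity-to-direction assignment flips under the relabelling $D_2\mapsto D'_1$, $D_3\mapsto D'_2$, \dots), but this is inconsequential for the digit sequence: the update of each $a_i$ follows the same zigzag regardless of the physical rotation direction of $D_i$, and the algorithm's disk-selection rule is direction-agnostic. Applying the inductive hypothesis (to the peg-reversed variant, which generates the same digit sequence), the pause-move digit updates on $(a_n,\dots,a_2)$ trace out $G_{n-1}$; combining this with the block structure of $a_1$ yields~\eqref{m-ary} for $n$.

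The principal technical point is to verify cleanly the ``reset'' of the $D_1$--$D_2$ dance after each exceptional pause, i.e.\ to confirm that the configuration of all disks at that moment again satisfies the relative-position identity ($D_1$ resting atop $D_2$ on the same peg) from which the modular arithmetic above governs the next $m-1$ pauses uniformly. Legality of the larger-disk moves is inherited by induction: no larger disk is ever placed on top of $D_2$ in the sub-algorithm, so none can end up on $D_1$ either, and $D_1$ being smallest creates no obstructions of its own.
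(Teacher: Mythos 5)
Your proof is correct and follows essentially the same route as the paper's: an induction mirroring Theorem~\ref{odd}, whose key step is the cyclic-position computation showing that $D_2$ makes exactly $m-1$ moves before $D_1$'s next sweep lands on top of it (your congruence $T\equiv m \pmod{m+1}$ is the paper's statement that the clockwise distance from $D_1$ to $D_2$ grows by $1$ per move and reaches $m-1$). You additionally spell out the mirror-image/direction-flip subtlety in the inductive reduction, which the paper leaves implicit.
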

\begin{proof}
This follows along the same lines as Theorem~\ref{odd}.
When we look at the pattern of motion of $D_1$ and $D_2$, we
 observe again that $D_2$ makes $m-1$ steps until it is covered by
 $D_1$,
see Fig.~\ref{fig:pentagon}:
 After the first move of $D_2$, the
clockwise
cyclic
distance from $D_1$ to $D_2$ is $1$, and with each move of $D_2$, this
distance increases by~1. Thus, after $m-1$ moves, the distance becomes
$m-1$, and $D_1$ will land on top of $D_2$ with its next sweep.
\end{proof}
\begin{figure}[htb]
  \centering
  \includegraphics[scale=0.8]{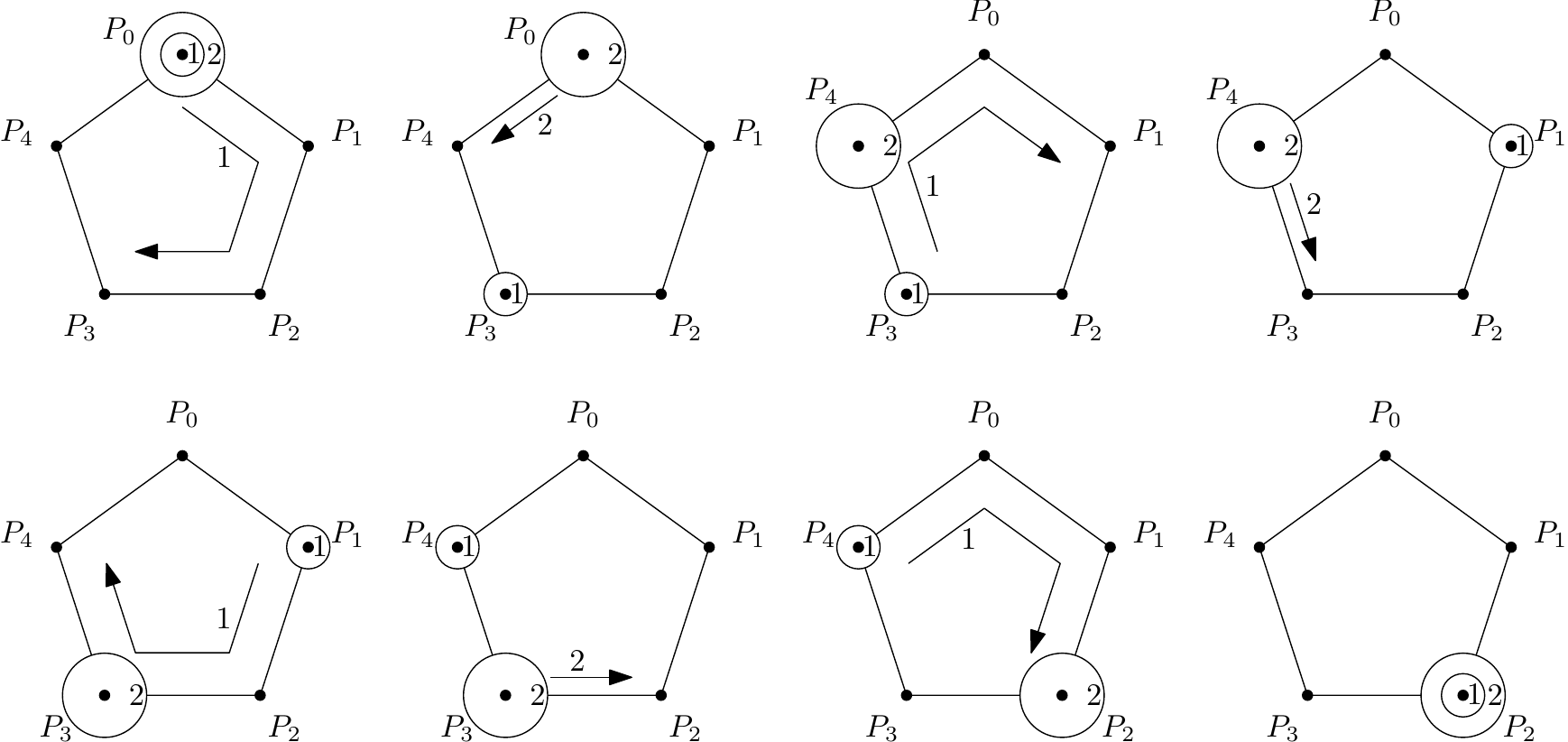}
  \caption{One period of movement of the two smallest disks $D_1$ and
    $D_2$
when Algorithm EVEN generates all tuples over an alphabet of size $m=4$ using $m+1=5$ pegs.}
  \label{fig:pentagon}
\end{figure}

 Algorithms ODD and EVEN
 do not generate a shortest sequence of moves to the
target configuration except when $m=3$ or $m=2$.
We could not come up with some set of natural constraints under which
our algorithms
give a shortest solution.

\section{The Towers of Bucharest++}
\label{odd-compressed}

In Algorithm ODD, the intermediate pegs $P_1,\ldots,P_{m-2}$ will
always be available for selecting the smallest disk $D_k$ to be moved.
Thus, one can coalesce these pegs into one peg, keeping only
 the two extreme pegs separate.
With three pegs, we can use the same hardware as the tower of Bucharest,
but we have to record the value of the digits, since they are no
longer expressed by the position. A simple method is to provide the disks
with \emph{marks} that indicate the value as well as the direction of
movement, which we have to remember anyway.
Each disk cycles through $2m-2$ values, potentially augmented with
direction information:
\begin{equation}
  \label{eq:up-down}
0,\ 
 1\!\uparrow,\ 
 2\!\uparrow,\ 
\!\ldots,\ 
 (m-2)\!\uparrow,\ 
m-1,\ 
 (m-2)\!\downarrow,\ 
\!\ldots
 2\!\downarrow,\ 
 1\!\downarrow,\ 
0,\ 
 1\!\uparrow,\ 
\ldots
\end{equation}
 It makes sense to encode this information like a dial with $2m-2$
equally spaced directions, as shown in Fig.~\ref{fig:dials}a.
A disk whose mark shows 0 is always on the left peg $P_0$.
A~disk whose mark shows $m-1$ is always on the right peg~$P_2$.
Otherwise, it is on the middle peg~$P_1$.
When we say we \emph{turn a disk}, this means
that we turn it clockwise to the next dial position,
and if necessary, move it to the appropriate peg.
\begin{figure}[htb]
  \centering
\tabskip=0pt plus 1fill
\halign to \hsize{&\hfil#\hfil\cr
  \includegraphics[scale=0.8]{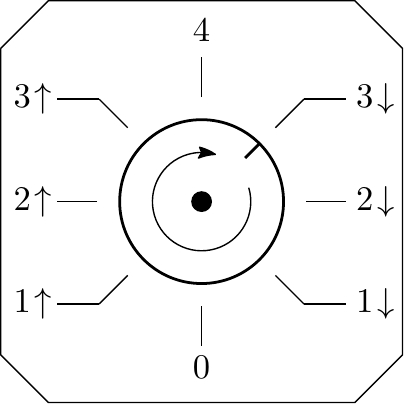}
&  \includegraphics{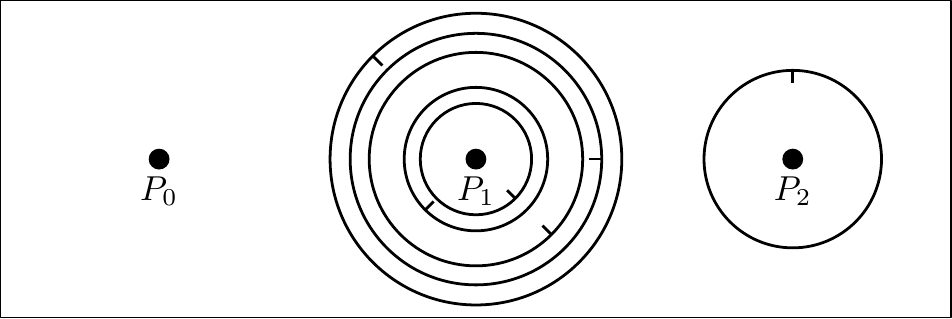}
\cr
(a)&(b)\cr
}
  \caption{(a) The upgraded disk of the Towers of Bucharest++ and the
    meaning of its positions, for $m=5$.
(b) The situation of Figure~\ref{fig:klagenfurt}, compressed to 3 pegs.
}
  \label{fig:dials}
\end{figure}
%
%
%
\begin{tabbing}
    \quad\=\+
\emph{Algorithm} \textbf{ODD-COMPRESSED}. Generation of the $m$-ary Gray code for odd $m$.\\
    \quad\=\+
Initialize:
Put all disks on $P_0$, and turn them to show~0.\\
    \textbf{loop}:\\
    \qquad\=\+
Turn disk $D_1$ $m-1$ times until it arrives at one of the extreme
pegs $P_0$ or $P_2$.
\\
Let $D_k$ be the smaller of the topmost disks on the two pegs not
covered by $D_1$.
\\
If there is no such disk, terminate.
\\ Turn $D_k$ once.
\end{tabbing}
The digits $a_i$ can be read off from the dial positions.
Correctness follows by comparison with Algorithm ODD,
checking that the transition between successive states is
preserved when merging the intermediate pegs into one peg.
\qed

This algorithm can now even be generalized to mixed-radix Gray codes for 
 the $n$-tuples $(a_n,\ldots,a_1)$ with
$0\le a_i<m_i$, for some sequence of radixes $m_i\ge 2$, provided
that all $m_i$ are odd.


\section{Simulation}
\label{simulation}
All our algorithms can be easily simulated in software on a digital
computer.\footnote
{Nowadays, most households will more readily have access to a computer than
to towers of Hanoi.}
A stack will do for each peg.
If there are $k$ pegs, the algorithm takes $O(k)$ time to
compute the next move and accordingly produce the next element of the
Gray code sequence. If $m$ is constant, then $k=m$ or $k=m+1$ in
Algorithms ODD and EVEN, and these algorithms can pass as loopless
algorithms.
If $k$ is large, Algorithm ODD can be replaced by ODD-COMPRESSED,
which has only 3 pegs, independent of $m$.

To make a truly loopless algorithm out of Algorithm EVEN,
at the expense of an
increased overhead,
 we can use
the following easy fact, which follows directly from the algorithm statement.
\begin{lemma}
In the algorithms EVEN, ODD, and ODD-COMPRESSED,
  when a disk $D_k$ is moved, all smaller disks $D_1,\ldots,D_{k-1}$ are
  on the same peg.
\qed
\end{lemma}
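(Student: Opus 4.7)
My plan is to prove all three cases by a single induction on $k$. The base cases $k=1$ (vacuous, no smaller disks) and $k=2$ (only $D_1$ is smaller, so the statement holds trivially) are immediate.

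For $k \geq 3$, I would invoke the recursive structure that is already established inside the proof of Theorem~\ref{odd} and in the analogous proof for Algorithm EVEN. Each of those proofs shows that if one suppresses the moves of $D_1$, the motion pattern of the disks $D_2, D_3, \ldots, D_n$ coincides exactly with the same algorithm run on $n-1$ disks, with $D_2$ taking the role of the externally controlled disk $D_1$. Iterating this observation, the moves of $D_j, D_{j+1}, \ldots, D_n$ simulate the algorithm on $n-j+1$ disks with $D_j$ as the first disk. I would first extract this self-similarity as a small auxiliary lemma so that the induction has something clean to quote.

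Given the self-similarity, the inductive step is short. At the moment $D_k$ is about to be moved, with $k \geq 3$, this is also a move of the algorithm one level down, where $D_2$ plays the role of $D_1$; the proofs cited above explicitly show that at such a moment $D_1$ is resting on top of $D_2$ on a common peg. Applying the inductive hypothesis inside the reduced $(n-1)$-disk instance gives that $D_2, D_3, \ldots, D_{k-1}$ all lie on a single peg, and combining this with the position of $D_1$ shows that $D_1, D_2, \ldots, D_{k-1}$ occupy that same peg. The ODD-COMPRESSED case then follows immediately from the ODD case, since ODD-COMPRESSED is obtained from ODD by coalescing the intermediate pegs into one middle peg, an operation that can only keep disks together, not separate them.

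The main obstacle I expect is formalizing the self-similarity rigorously: one has to verify that both the selection rule (``smallest topmost disk not on $D_1$'s peg'') and the direction-of-movement rule for the chosen disk translate faithfully to the reduced algorithm on $D_2, \ldots, D_n$. Once this bookkeeping is in place, the induction amounts to combining the inductive hypothesis with the stacking observation already present in the earlier proofs.
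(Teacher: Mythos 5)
Your argument is essentially correct, but it is a much longer route than the one the paper intends: the lemma is stated with a \qed{} and the sentence ``which follows directly from the algorithm statement,'' because it is an immediate consequence of the selection rule. In each of the three algorithms, the disk $D_k$ to be moved is chosen as the \emph{smallest} of the topmost disks on the pegs that do not carry $D_1$. Since (given that the algorithms never place a larger disk on a smaller one) the topmost disk of any peg is the smallest disk on that peg, no disk $D_i$ with $i<k$ can sit on any peg other than the one carrying $D_1$ --- otherwise that peg's topmost disk would have index at most $i<k$, contradicting the minimality in the choice of $k$. Hence $D_1,\ldots,D_{k-1}$ are all stacked on $D_1$'s peg, and the case $k=1$ is vacuous. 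Your induction on $k$ via the self-similarity extracted from the proof of Theorem~\ref{odd} (and its analogue for EVEN) does reach the same conclusion, and the facts you quote ($D_1$ resting on $D_2$ when a disk $D_k$ with $k\ge 3$ moves; coalescing pegs preserving ``being on the same peg'') are indeed available in the paper. What your approach costs is the burden you yourself identify --- rigorously formalizing the reduction of $D_2,\ldots,D_n$ to an $(n-1)$-disk instance, separately for ODD and EVEN --- all to establish a fact that the greedy selection rule gives in one line, at every step and uniformly for all three algorithms. The one thing both arguments share is the reliance on legality of the move sequence (no larger disk ever atop a smaller one), which is supplied by the correctness proofs of the algorithms.
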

To get a loopless implementation,
the set of disks on a peg has to be maintained as a
sequence of maximal intervals of successive integers, instead of
storing them as a stack in the usual way. Then, whenever $D_1$ is at
rest, the disk $D_k$ to be moved can be determined in constant time as the smallest
missing disk on the peg containing $D_1$. 

\section{Working ahead}
\label{sec:ahead}

While we are at the topic of Gray codes, we might as well mention
another approach for loopless generation of Gray codes, which results
from a generally applicable technique for converting amortized bounds
into worst-case bounds.
We start from the observation that was already mentioned in connection
with the delta-sequence in Section~\ref{delta}:
\begin{proposition}
Consider the enumeration of
 the $n$-tuples $(b_n,\ldots,b_1)$ with
$0\le b_i<m_i$ in \emph{lexicographic order}.
If, between two successive tuples of
the sequence, 
the $j$ rightmost digits are changed,
then,
at the corresponding transition in the Gray code,
 the $j$-th
digit from the right is changed.
\qed
\end{proposition}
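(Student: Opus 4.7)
The plan is to induct on $n$, the number of digits. In the base case $n=1$, both orderings enumerate $0,1,\ldots,m_1-1$ and the unique digit changes at every transition, so $j=1$ in either case.

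For the inductive step, I would split each transition into two cases according to the structure of the reflected Gray code: the sequence of $(\prod_{i=1}^n m_i)$ tuples is organized into $\prod_{i=2}^n m_i$ consecutive blocks of length $m_1$, and within each block the higher $n-1$ digits are held constant while the rightmost digit sweeps from $0$ to $m_1-1$ or vice versa. The first case is an \emph{intra-block} transition: in lexicographic order the rightmost digit is not at $m_1-1$, so only it changes and $j=1$; in the Gray code the sweep changes exactly the rightmost digit, so again $j=1$.

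The second case is an \emph{inter-block} transition, where in lex order the rightmost digit carries. Here $j=j'+1$, where $j'$ is the number of digits that change in the transition between the two corresponding $(n-1)$-tuples formed by the higher digits in lexicographic order. The key observation is the reflection in \eqref{m-ary}: consecutive blocks enumerate the rightmost digit in opposite directions, so at the boundary the last digit is at the same extreme ($0$ or $m_1-1$) in both the final tuple of the outgoing block and the initial tuple of the incoming block. Consequently the rightmost digit does \emph{not} change at an inter-block transition; instead the change occurs among the higher $n-1$ digits, following exactly one transition of the $(n-1)$-digit reflected Gray code. By the inductive hypothesis applied to that subsequence, the changed position among the higher digits is the $j'$-th from the right, which is the $(j'+1)$-st from the right in the full $n$-tuple. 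Thus the Gray code changes digit $j'+1 = j$, matching the lex count.

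The main obstacle is merely bookkeeping: one has to verify cleanly that ``step $t$ in the outer lexicographic order of $(n-1)$-tuples over $(m_2,\ldots,m_n)$'' corresponds correctly to the $t$-th block boundary in the Gray code, and that the direction of the sweep (determined by the parity of the block index) is precisely what makes the rightmost digit's extreme value persist across the boundary. Once this correspondence is spelled out, the induction closes immediately.
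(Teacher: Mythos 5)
Your proof is correct. The paper states this proposition without proof, as an observation following from the recursive structure of the reflected Gray code (cf.\ the remark on the delta sequence in Section~\ref{delta}); your induction on $n$, splitting transitions into intra-block sweeps and inter-block reflections, is precisely the formalization of that recursive structure, so it matches the intended argument.
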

We can thus find the position $j$ that has to be changed in the Gray code by lexicographically
``incrementing'' $n$-tuples $(b_n,\ldots,b_1)$ in a straightforward way:

\pagebreak

\begin{tabbing}
    \quad\=\+
\emph{Algorithm} \textbf{DELTA}. Generation of the delta-sequence for the
Gray code.\\
    \quad\=\+
Initialize:
$(b_{n},\ldots ,b_2,b_1) := (0,0,\ldots ,0,0)$\\
 $Q$ :=  an empty list\\
    \textbf{loop}:\\
    \qquad\=\+
$j := 1$\\
\textbf{while} $b_j=m_j-1$:\\
    \qquad\=\+
  $b_j := 0$\\
  $j := j+1$\\
\textbf{if} $j=n+1$: TERMINATE
\-\\
$b_j := b_j+1$\\
$Q.\textit{append}(j)$
\end{tabbing}

The delta sequence is stored in $Q$.
It is known that the \emph{average} number
of loop iterations
for producing an entry of $Q$ is less than~2.
We use this fact to coordinate the \emph{production} of entries
$Q$ by Algorithm DELTA with their \emph{consumption} in the Gray code
generation, turning $Q$ into a buffer of bounded capacity.
This leads to the following loopless algorithm:

\bigskip

\noindent
\quad
 \begin{minipage}[t]{1\linewidth}
\begin{tabbing}
 \quad\=  \quad\=  \qquad\=  \qquad\=  \qquad\=  \qquad\=
 \qquad\=\kill
\emph{Algorithm} \textbf{WORK-AHEAD}.\\ Generation of the Gray code.\\
\textbf{procedure}
STEP:\+\\
\textbf{if} $b_j=m_j-1$:\+\\
  $b_j := 0$\\
  $j := j+1$\-\\
\textbf{else}:\+\\
        \textbf{if} $Q$ is not filled to capacity:\+\\
 $b_j := b_j+1$\\
$Q.\textit{append}(j)$\\
$ j := 1$
\end{tabbing}
 \end{minipage}
\quad\vrule\quad
\begin{minipage}[t]{0.5\linewidth}
\begin{tabbing}
 \qquad\=  \qquad\=  \qquad\=  \qquad\=  \qquad\=  \qquad\= \qquad\=\kill
{$(a_{n},\ldots, a_2,a_1) := (0,\ldots ,0,0)$}\\
$(d_{n},\ldots, d_2,d_1) := (1,\ldots ,1,1)$\\
$(b_{n+1},b_{n},\ldots, b_2,b_1) := (0,0,\ldots ,0,0)$;
$m_{n+1} := 2$\\
  $Q$ := queue of capacity $B := \lceil \frac {n}2\rceil$, initially empty\\
  $j:=1$\\
  \textbf{loop}:
\+\\
{visit the $n$-tuple $(a_{n},\ldots ,a_2,a_1)$}\\
      STEP\\
      STEP\\
     remove $j$ from $Q$\\
\textbf{if} $j=n+1$: TERMINATE\\
$a_{j} := a_{j} +d_j$\\
\textbf{if} $a_j=0$ or $a_j=m_j-1$: $d_j := -d_j$
\end{tabbing}
\end{minipage}
\bigskip

The procedure STEP on the left side encompasses one loop iteration of
Algorithm DELTA.
By programmer's license, we have moved
 the initialization
$ j := 1$ of the loop variable to the end of the previous loop. We have also moved the
termination test $j=n+1$ to the side of the consumer.
Accordingly, we had to extend the $n$-tuple $b$ into an $(n+1)$-tuple,
setting $m_{n+1}$ arbitrarily to 2.
When $Q$ is full, nothing is done in the procedure STEP, and the repeated
call of STEP will try to insert the same value into $Q$. Thus, apart
from the termination test, a repeated execution of STEP will
faithfully carry out Algorithm DELTA.

The Gray code algorithm on the right couples two production STEPs with
one consumption step, which takes out an
entry $j$ of $Q$ and carries out the update $a_j := a_j \pm 1$.
Every digit $a_j$ must cycle up and down through its values in the
sequence \eqref{eq:up-down}, and thus, we have to remember the
direction $d_j=\pm1$ in which it moves, as in Algorithm~ODD.

To show that the algorithm is correct, we have to ensure
that the 
queue $Q$ is never empty when the algorithm retrieves an
  element from it.
This is proved below in
Lemma~\ref{queue-not-empty}.

 The clean way to terminate the algorithm would be to stop inserting
elements into $Q$ as soon as $j=n+1$ is \emph{produced} in STEP,
as in Algorithm DELTA.
Instead, termination is triggered when the value
 $j=n+1$ is \emph{removed} from
  $Q$.
 Due to this delayed
termination test, a few more iterations of STEP can be carried out,
but they cause no harm.
%

For the \emph{binary} Gray code ($m_i=2$ for all $i=1,\ldots,n$),
the algorithm can be simplified. With a slightly larger
buffer $Q$ of size
$B' := \max\{\lceil \frac {n+1}2\rceil,2\}$,
 the test
whether
 $Q$ is filled to capacity can be omitted,
see Lemma~\ref{binary} below.
The reason
is that the average number
of production STEPs per item approaches 2 in the limit, and
accordingly, the queue automatically does not grow beyond the minimum
necessary size.
The directions $d_i$ are of course also superfluous in the binary case.

The idea of ``working ahead''
 is opposite to the approach of delaying
work as long as possible
that underlies many
``lazy'' data structures and also
lazy evaluation in some functional programming languages.
In a similar vein,
Guibas,  McCreight,  Plass, and Janet R. Roberts~\cite{Guibas:1977}
have obtained worst-case bounds of $O(\log k)$ for updating a sorted linear
list at distance $k$ from the
beginning.\footnote
{We thank Don Knuth for leading us to this reference.}
Their algorithm works ahead to hedge against sudden bursts of activity.
Our setting is much simpler, because we do not depend on the update
requests of a ``user'' and we can plan everything in advance.

 At a different level of complexity, the idea of working ahead occurs in an algorithm of
{Wettstein}~\cite[Section~6]{Wettstein}.
This trick, credited to Emo Welzl, is used to achieve \emph{polynomial delay} between
 successive solutions when
 enumerating non-crossing perfect matching of a planar point set,
despite having to build up a network with exponential space in a
preprocessing phase.

\subsection{An alternative STEP procedure} 
\label{sec:alternative-STEP}

As an alternative to the organization of Algorithm WORK-AHEAD,
we can incorporate the termination test into
the STEP procedure:

\begin{tabbing}
 \quad\=  \quad\=  \qquad\=  \qquad\=  \qquad\=  \qquad\=
 \qquad\=\+\kill
\textbf{procedure}
STEP$'$:\+\\
\textbf{if} $j=n+1$: TERMINATE\\
\textbf{if} $b_j=m_j-1$:\+\\
  $b_j := 0$\\
  $j := j+1$\-\\
\textbf{else}:\+\\
        \textbf{if} $Q$ is not filled to capacity:\+\\
 $b_j := b_j+1$\\
$Q.\textit{append}(j)$\\
$ j := 1$
\end{tabbing}

With this modified procedure STEP$'$, the termination test in
the main part of Algorithm WORK-AHEAD can of course be omitted.
We also need not extend the arrays $b$ and $m$ to $n+1$ elements.

 The algorithm still works correctly because there
are no unused entries in the queue when STEP$'$ signals termination.
Let us prove this:

The termination signal is sent instead of producing the value $j=\bar\rho(k)=n+1$
for $k=m_0m_1\ldots m_{n-1}$. Generating this signal takes
 $n+1$
iterations of STEP. In this time, no new values are added to the
queue.
Let us assume 
that
 the production of 
$\bar\rho(k)
$ was started during iteration $k_0$,
and the buffer was filled
 with $B_0\le B$ entries at that time.
The first of these entries is consumed at the end of iteration $k_0$, and
all $B_0$ entries of the
buffer
have been used up at the beginning of iteration $k_0+B_0$.
By this time,
at most $2B_0\le 2B\le n+1$ iterations of STEP were carried out
and contributed to the production of the termination signal.
It follows that when STEP discovers that $j=n+1$, no unused
entries are in the stack, and it is safe to terminate the program.

It is important not to ``speed up'' the program by moving the
termination test into the \textbf{if}-branch after the statement
$j:=j+1$.
Also, we must use exactly the prescribed
buffer size for~$Q$.
Therefore, this variation is incompatible with the simplification for the binary case
mentioned above. 

\subsection{Correctness proofs for the work-ahead algorithms}

We define the ruler function $\rho$
and the modified ruler function $\bar\rho$
with respect to
a sequence of radixes $m_1,\ldots,m_n$ as follows:
\begin{equation*}
 \rho(k) := \max \{\, i : 0\le i\le n,\ m_1m_2\ldots m_i\text{ divides
   }k\,\},\qquad
\bar
 \rho(k) := 
 \rho(k)+1
\end{equation*}
Then the $k$-th value that is entered into $Q$ is $\bar \rho(k)$, and
for computing this value,
 Algorithm DELTA needs $\bar\rho(k)$  iterations,
and accordingly, Algorithm WORK-AHEAD needs $\bar\rho(k)$ STEPs.

\begin{lemma}\label{queue-not-empty}
  In Algorithm WORK-AHEAD, the buffer $Q$ never becomes empty.
\end{lemma}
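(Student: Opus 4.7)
The plan is to combine an unthrottled rate bound with a structural observation that controls the effect of throttling. First I would establish the rate bound: producing items $1,\ldots,k$ in (unthrottled) DELTA costs exactly $\sum_{i=1}^k \bar\rho(i)$ STEPs, and the dyadic estimate
\[
\sum_{i=1}^k \rho(i) \;=\; \sum_{j\ge 1}\lfloor k/(m_1\cdots m_j)\rfloor \;\le\; k\sum_{j\ge 1} 2^{-j} \;\le\; k,
\]
using $m_i\ge 2$, gives $\sum_{i=1}^k \bar\rho(i)\le 2k$. Consequently the idealized unthrottled production count $\pi(T)$ satisfies $\pi(2t)\ge t$, so that just before iteration~$t$'s consume the buffer in the unthrottled variant would contain at least $\pi(2t)-(t-1)\ge 1$ items.

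Next I would argue that throttling cannot spoil this bound. Let $P^*(T)$ and $W(T)$ denote the actual production count and the number of wasted STEPs in the first $T$ STEPs of WORK-AHEAD. Since a wasted STEP leaves DELTA's state unchanged, $P^*(T)=\pi(T-W(T))$. Wasted STEPs occur only when the buffer is at capacity $B=\lceil n/2\rceil$, so they cannot themselves coincide with an underflow. The remaining concern is whether the delay they cause can make the buffer drain to empty later. To control this, I would use the \emph{carry-burst observation}: each ``carry'' STEP (the branch $b_j=m_j-1$) advances $j$ by one, and once $j$ reaches $n+1$ the next STEP must produce, because $b_{n+1}=0<m_{n+1}-1=1$. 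Hence at most $n$ consecutive carry STEPs can occur, and since $2B\ge n$, any block of $2B$ STEPs either already contains a production or is immediately followed by one.

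Putting the pieces together, suppose for contradiction that the buffer is empty just before the consume in some \emph{smallest} iteration $t^*$, so $P^*(2t^*)=t^*-1$. If $W(2t^*)=0$ this contradicts $\pi(2t^*)\ge t^*$. Otherwise, localise attention to the interval between the last wasted STEP (at time $T^\star\le 2t^*$, when the buffer held $B$ items) and the underflow moment: this interval contains exactly $B$ consumptions and, since every STEP in it is effective, strictly fewer than $B$ productions. Using $\sum_{i=a+1}^{a+K}\bar\rho(i)\le 2K+n$ for a DELTA trajectory starting at any counter value $a$, together with the carry-burst bound, I would show that the interval is too long to admit such a production deficit, yielding the contradiction. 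I expect the main obstacle to be the tight parity case $n$ even, where $2B=n$ leaves no slack in the rate bound; here one must argue directly that the $n$ forced carries are immediately followed by the production of item $n+1$, which reaches the queue \emph{before} the next scheduled consume, so the would-be underflow is in fact the algorithm's safe termination.
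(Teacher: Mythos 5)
Your overall strategy is the paper's: the same split into the unthrottled case (bounded by $\sum_{i=1}^k\bar\rho(i)\le 2k$) and the throttled case (restart the accounting at the last skipped STEP, where the buffer is known to hold $B$ items, and bound the windowed sum $\sum_{i=k'+1}^{k}\bar\rho(i)$). The first half is fine and matches the paper's inequality $S(k)\le 2k$ exactly. The carry-burst observation is true but plays no role in this lemma.

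There is, however, a genuine gap in the second half, and you have half-diagnosed it yourself. What the argument needs is $S(k)-S(k')\le 2(k-k')+2B-1$ (the $-1$ comes from the one leftover STEP needed to finish producing $\bar\rho(k_0+B)$, whose insertion was the skipped STEP). Your stated windowed bound $\sum_{i=a+1}^{a+K}\bar\rho(i)\le 2K+n$ combined with $2B\ge n$ only yields $\le 2(k-k')+2B$, which is one too large, and this shortfall is not confined to the termination step: a tight window can occur anywhere the floor functions $\lfloor k/(m_1\cdots m_j)\rfloor-\lfloor k'/(m_1\cdots m_j)\rfloor$ simultaneously incur their maximal round-off, so patching only the final production of $\bar\rho(N)=n+1$ does not close the case. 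The fix is much simpler than the termination analysis you sketch: the estimate $\lfloor k/x\rfloor-\lfloor k'/x\rfloor<(k-k')/x+1$ is \emph{strict}, so $S(k)-S(k')<2(k-k')+n$, and since both sides are integers this already gives $S(k)-S(k')\le 2(k-k')+n-1$, which is exactly what is required. Separately, your bookkeeping ``exactly $B$ consumptions and strictly fewer than $B$ productions'' in the window is off: the correct statement is that the buffer drops from $B$ to $0$, i.e.\ consumptions exceed productions by $B$ over the window, and the condition to verify is that the $2(k-k_0)$ STEPs after iteration $k_0$ suffice to produce $\bar\rho(k_0+B),\ldots,\bar\rho(k)$, namely $1+\sum_{i=k_0+B+1}^{k}\bar\rho(i)\le 2(k-k_0)$.
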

\begin{proof}
  We number the iterations of the main loop as
  $1,2,\ldots,m_1m_2\ldots m_n$. In the last iteration,
the algorithm terminates.

Let us show that the queue $Q$ is not empty in iteration $k$.
We distinguish two cases.
\begin{enumerate}
\item [(i)]
Up to and including iteration $k$, two repetitions of STEP were always completed.
\item [(ii)]
Some repetitions of STEP had no effect because the buffer $Q$ was
full.
\end{enumerate}
In case~(i), production of all values $\rho(i)$ for $i=1,\ldots,k$
 requires
 \begin{equation*}
  S(k) := \sum_{i=1}^k \bar\rho(i) 
 \end{equation*}
calls to STEP. To show that these calls are completed by the time when $\bar\rho(k)$ is
needed, we have to show
\begin{equation}\label{left}
  S(k)\le 2k. 
\end{equation}

In case~(ii), let $k_0$ be the last iteration when an execution
of STEP was ``skipped''.
This means that the queue $Q$ was filled
to capacity $B$ just before removing the value $j=\bar\rho(k_0)$,
and it
 contained the values
$\bar\rho(k_0),\bar\rho(k_0+1),\ldots,\bar\rho(k_0+B-1)$. Since then, STEP was called $2(k-k_0)$ times,
and
$\bar\rho(k)$ is ready when it is needed, provided that
\[1 +\sum_{i=k_0+B+1}^k \bar\rho(i)\le 2(k-k_0)\]
whenever $k\ge k_0+B$.
The left-hand side of this inequality is the number of necessary STEPs
for computing the values up to $\bar\rho(k)$.
Computing $\bar\rho(k_0+B)$ takes just one more STEP, since the STEP
that would have stored this value in $Q$ was abandoned in iteration $k_0$.
Setting $k'=k_0+B$, we can express the inequality equivalently as
\begin{equation}\label{interval}
S(k)-S(k')\le 2(k-k'+B)-1 \text{ for $k'\le k$ }
\end{equation}
We can write an explicit formula for $S(k)$:
\begin{equation*}
  S(k) = k + 
\left\lfloor \frac k{m_1} \right\rfloor +
\left\lfloor \frac k{m_1m_2} \right\rfloor +
\cdots +
\left\lfloor \frac k{m_1m_2\ldots m_{n}} \right\rfloor
\end{equation*}
Since all $m_i\ge 2$, we get
$S(k)\le k+k/2+k/4+k/8+\cdots+k/2^n < 2k$,
proving~\eqref{left}.
For the other bound~\eqref{interval}, we use the relation
\begin{equation*}
\left\lfloor \frac k{x} \right\rfloor  -
\left\lfloor \frac {k'}{x} \right\rfloor <
\frac{k-k'}x+1
\end{equation*}
to get
\begin{equation*}
  S(k)-S(k')< (k-k')+(k-k')
\cdot(
\tfrac12+
\tfrac14+
\tfrac18+\cdots+\tfrac1{2^n}) + n < 2(k-k')+n
\end{equation*}
Since the left-hand side is an integer,
we obtain $S(k)-S(k')\le 2(k-k')+n-1$
and this implies~\eqref{interval} since
the buffer size
 $B := \lceil \frac {n}2\rceil$ satisfies $2B\ge n$.
\end{proof}

In Algorithm WORK-AHEAD, the STEPs should generate
entries $\bar\rho(1) ,\bar\rho(2),\ldots$ 
of $Q$ up to $\bar\rho(N)$, where $N := m_1m_2\ldots m_n$.

The following lemma shows that production of the STEPs
may overrun their target by at most one.
Since the algorithm
 has already made provisions to generate
$\bar\rho(N)=n+1$ by extending the arrays $b$ and $m$ to size $n+1$
instead of~$n$, this one extra entry does not cause any harm.
\begin{lemma}
\label{overrun}
  In Algorithm WORK-AHEAD, 
the last entry that is added to $Q$ is
$\bar\rho(N)
$
or $\bar\rho(N+1)
$.  
\end{lemma}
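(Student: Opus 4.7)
The plan is a budget argument. Call a STEP \emph{productive} if it either carries out a while-iteration of Algorithm DELTA or successfully appends an entry to~$Q$; a skip (which happens precisely when $Q$ is at capacity) contributes nothing. Producing the entries $\bar\rho(1),\ldots,\bar\rho(k)$ requires exactly $S(k)=\sum_{i=1}^k \bar\rho(i)$ productive STEPs, since entry~$\bar\rho(i)$ costs $\bar\rho(i)-1$ while-iterations plus one append. Because WORK-AHEAD's main loop runs for exactly $2N$ STEPs before the terminating pop in iteration~$N$, productive STEPs form a subset of this total and we obtain $S(k^*)\le 2N$, where~$k^*$ is the index of the last entry added to~$Q$.

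Next I would use the closed form $S(N)=N+\lfloor N/m_1\rfloor+\cdots+\lfloor N/(m_1\cdots m_n)\rfloor=N+N/m_1+\cdots+1$ (each divisor divides~$N$ exactly) together with $m_i\ge 2$ to deduce $S(N)\le 2N-1$, with equality exactly in the binary case. Since $m_1\nmid N+1$, we have $\bar\rho(N+1)=1$, and so $S(N+1)\le 2N$: one overrun entry fits inside the budget. For the binary case, $\bar\rho(N+2)\ge 2$ together with $S(N)=2N-1$ gives $S(N+2)\ge 2N+2>2N$, directly contradicting $S(k^*)\le 2N$ whenever $k^*\ge N+2$.

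For non-binary radixes the slack $2N-S(N)$ can exceed~$1$, so the budget bound alone does not forbid $k^*\ge N+2$. Here I would invoke the queue-capacity bound $|Q|\le B=\lceil n/2\rceil$: just before iteration~$N$'s terminating pop, exactly $N-1$ earlier pops have already been done, so the un-popped entries of $Q$ number $k^*-(N-1)$, and this is at most~$B$, yielding $k^*\le N+B-1$. This already gives $k^*\le N+1$ whenever $n\le 4$.

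The hard part is the remaining regime, non-binary with $n\ge 5$, where the slack bound and the queue bound are individually loose. Here I would track the DELTA state during iteration~$N$: immediately after $\bar\rho(N)=n+1$ is appended, one has $b_1=\cdots=b_n=0$ and $j=1$. An attempted append of $\bar\rho(N+1)=1$ sets $b_1:=1$; the subsequent STEP then sees $b_1=1$, which is a forced while-step when $m_1=2$ and, for larger~$m_1$, still an append attempt that must respect $|Q|\le B$. Making this interaction between the two bounds precise --- showing that by the time iteration~$N$ is reached, the queue has already filled enough that its two STEPs together can effect at most one more append past $\bar\rho(N)$ --- is the main obstacle, and is where I expect the proof to require the most care.
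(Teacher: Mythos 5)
Your budget bound $S(k^*)\le 2N$ and your capacity bound $k^*\le N-1+B$ are both correct as far as they go, but as you yourself note, together they only settle the binary case and the case $n\le 4$; the general non-binary regime with $n\ge 5$ is left open, and your sketch of ``tracking the DELTA state during iteration $N$'' does not identify the fact that actually closes it. So this is a genuine gap, not merely a matter of care.

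The missing idea is the one the paper's proof is built on: the final entry $\bar\rho(N)=n+1$ costs $n+1$ consecutive STEPs to produce, and $n+1\ge 2B$ for $B=\lceil n/2\rceil$. During those $n+1$ STEPs nothing is appended to $Q$, while the consumer removes one entry per iteration (i.e.\ per two STEPs); hence the at most $B$ entries present when this long production begins are all consumed before the production finishes, and $\bar\rho(N)$ is appended into an \emph{empty} queue, necessarily during iteration $N$ itself. From there the conclusion is immediate: if the append happens in the second STEP of that iteration, $\bar\rho(N)$ is popped at once and is the last entry; if it happens in the first STEP, the second STEP appends $\bar\rho(N+1)=1$ (a one-STEP production, since $m_1\nmid N+1$) and then termination occurs. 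This single observation disposes of all radixes and all $n$ at once, with no need for the closed form of $S$, the slack estimate $2N-S(N)$, or the case split on $n$; your two global bounds cannot substitute for it because they do not see that the expensive production of $\bar\rho(N)$ and the draining of the queue happen \emph{at the same time}.
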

\begin{proof}
  The production of
$\bar\rho(N)=n+1$ takes $n+1\ge 2B$ STEPs.
It follows that the buffer $Q$ is empty when
$\bar\rho(N)=n+1$ is inserted, regardless of whether the
 production of
$\bar\rho(N)$ is started in the first or second STEP of an iteration.

If the production of $\bar\rho(N)=n+1$ is completed in the second STEP of an iteration,
it is thus immediately consumed, which leads to termination.
If $\bar\rho(N)$ is completed in the first STEP of an iteration,
 the second STEP will produce the value  $\bar\rho(N+1)=1$,
but then the algorithm will terminate
as well.
\end{proof}

Finally, we prove the simplification of the algorithm for the binary case.

\begin{lemma}
\label{binary}
  In the binary version of Algorithm WORK-AHEAD, i.e., when
  $m_i=2$ for all~$i=1,\ldots,n$, the buffer $Q$ automatically never gets
more than
$B' := \max\{\lceil \frac {n+1}2\rceil,2\}$ entries,
even if the test in STEP whether the buffer is full is omitted.
\end{lemma}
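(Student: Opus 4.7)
The plan is to bound the peak size of the buffer $Q$ directly using the cost function $S(m)=\sum_{i=1}^m \bar\rho(i)$ from the proof of Lemma~\ref{queue-not-empty}. The maximum number of entries in $Q$ is attained just after the second STEP of some iteration $k$, before the consume, at which moment $2k$ STEPs have been executed and $k-1$ items have been removed. With the buffer-full test omitted, every STEP faithfully performs one iteration of Algorithm DELTA, so after $2k$ STEPs the queue has received exactly $m$ appends, where $m$ is the largest index with $S(m)\le 2k$. The peak buffer size equals $m-(k-1)$, and it suffices to show $m-k+1\le B'$.

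The key computation specialises $S(m)$ to the binary case. For every $i\le 2^n+1$ the capping in the definition of $\rho$ is irrelevant, so $\bar\rho(i)=v_2(i)+1$, where $v_2$ denotes the $2$-adic valuation, and Legendre's identity $\sum_{i=1}^m v_2(i)=m-s_2(m)$ gives
\[
S(m)\;=\;2m-s_2(m),
\]
with $s_2(m)$ the number of $1$-bits in the binary expansion of $m$. Rewriting $S(m)\le 2k$ as $2(m-k)\le s_2(m)$ and using that $m-k$ is an integer yields $m-k\le\lfloor s_2(m)/2\rfloor$, so the peak buffer size is at most $\lfloor s_2(m)/2\rfloor+1$.

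Finally, I would bound $s_2(m)$ uniformly. By Lemma~\ref{overrun}, $m\le N+1=2^n+1$; for $m\le 2^n$ we have $s_2(m)\le n$, and the single overrun value $m=2^n+1$ satisfies $s_2(m)=2$. Hence
\[
\lfloor s_2(m)/2\rfloor+1 \;\le\; \max\bigl(\lfloor n/2\rfloor+1,\;2\bigr)\;=\;\max\bigl(\lceil (n+1)/2\rceil,\;2\bigr)\;=\;B',
\]
using the identity $\lfloor n/2\rfloor+1=\lceil (n+1)/2\rceil$. The only mildly delicate point, and the precise reason for taking a maximum with $2$ in the definition of $B'$, is the $n=1$ corner case: there the overrun entry $\bar\rho(2^n+1)=1$ really does land in $Q$ alongside $\bar\rho(2^n)$, pushing the peak from $1$ to $2$. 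I expect this small-$n$ bookkeeping to be the only mildly awkward step of the proof, since the heart of the argument is a one-line application of Legendre's formula.
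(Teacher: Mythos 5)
Your proof is correct, and it follows the same skeleton as the paper's --- compare the number of STEPs needed to produce the first $m$ queue entries with the $2k$ STEPs available by iteration $k$ --- but the computation at its core is genuinely different and, I think, cleaner. The paper argues by contradiction: an overfull buffer would force $S(k+B')\le 2k$, which is refuted via the generic estimate $\lfloor x\rfloor > x-1$, giving $S(k_1)\ge 2k_1-n$ for $k_1<2^n$; the boundary case $k_1\ge 2^n$ then needs a separate endgame argument tracking exactly when $\bar\rho(2^n)$ and $\bar\rho(2^n+1)$ enter the queue. You instead use the exact identity $S(m)=2m-s_2(m)$ from Legendre's formula, which turns the peak-size bound into $m-k\le\lfloor s_2(m)/2\rfloor$ and handles the main range ($s_2(m)\le n$ for $m\le 2^n$) and the overrun value ($s_2(2^n+1)=2$) uniformly; as a bonus it makes visible exactly why the $\max\{\cdot,2\}$ in $B'$ is needed, which the paper leaves implicit in its endgame analysis. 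Two small presentational points: your appeal to Lemma~\ref{overrun} for $m\le 2^n+1$ is slightly off-target, since that lemma is stated for the algorithm \emph{with} the buffer-full test; but the fact you need follows directly from your own formula ($S(2^n+1)=2^{n+1}$ while $S(2^n+2)>2^{n+1}$, where for $n=1$ one must remember the cap in $\rho$ makes $\bar\rho(4)=2$), so this is trivial to repair. And it is worth stating explicitly that the buffer size is monotone within an iteration until the removal, which justifies checking only the moment after the second STEP.
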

\begin{proof}
  Let us assume for contradiction that
the buffer becomes overfull in  iteration~$k$,
$1\le k \le 2^n$.
This means that, before $j=\bar\rho(k)$ is
  removed from $Q$, the $2k$ STEP operations have produced more
than $k-1+B'$ values. But this is impossible, since,
as we will show,
the production of the first $k_1=k+B'$ values would have taken
\begin{equation*}
  S(k_1) = k_1 
+
\left\lfloor \frac {k_1}{2} \right\rfloor +
\left\lfloor \frac {k_1}{2^2} \right\rfloor +
\cdots +
\left\lfloor \frac {k_1}{2^n} \right\rfloor 
> 2k
\end{equation*}
STEPs. To show the last inequality,
we first consider the case $k_1<2^n$. 
%
We apply the inequality
$\lfloor x \rfloor > x-1$ and obtain
\begin{math}
  S(k_1) 
>2k_1-
{k_1}/{2^n}-n
\end{math},
and since $k_1/2^n<1$ and $S(k_1)$ is an integer, we get
\begin{equation*}
  S(k_1) 
\ge 2k_1-n
   = 2k+2B'-n > 2k.
\end{equation*}

Let us now see at what time $\bar\rho(k_1)$ 
for $k_1\ge 2^n$
 is entered into $Q$.
When $k_1= 2^n$, no round-off takes place in the formula for
$S(k_1)$, and we have
\begin{math}
S(2^n)=2\cdot 2^n-1
\end{math}. 
This 
 shows that the production of
$\bar\rho(2^n)$ is completed in the first STEP of iteration $2^n$.
In the second STEP of this iteration,
$\bar\rho(2^n+1)=1$ is added to $Q$.
Thus, when
$\bar\rho(2^n)$ is about to be retrieved,
the buffer contains $2\le B'$ elements.
Then the algorithm terminates, and no more elements are produced.
%
\end{proof}

\section{Concluding Remarks}
\label{sec:remarks}


By our approach of modeling
the Gray code in terms of a
 motion-planning game,
we were able get a mixed-radix Gray code only when all radixes $m_i$ are
odd. It remains to find a model that would work for different
even radixes or even for radixes of mixed parity.

Another motion-planning game which is related to the binary Gray code is
 the \emph{Chinese rings
  puzzle}, see Scorer, Grundy, and Smith~\cite{some-binary},
 Gardner~\cite{gardner}, or 
Knuth~\cite[pp.~285--6]{kn4}.
In this puzzle, there are at most two possible moves in every state,
like in the towers of Bucharest,
and
by simulating the Chinese rings directly, one can obtain
 another loopless algorithm for the binary Gray code.
However, this algorithm does not seem to extend to other radixes.

\bibliography{gray}




\appendix

\section{Appendix: PYTHON simulations of the algorithms}
Here we list prototype implementations in \textsc{Python}.
They should run equaly with Python~2.7 and Python~3.
The pegs, the string $a$, and the array of \texttt{directions} are
kept as global variables.
\pagebreak
\subsection{Basic procedures}
The procedure \texttt{visit} prints the string and the contents of the pegs.
\begin{verbatim}
def visit():
    print ("".join(str(x) for x in reversed(a[1:])) + " "+
           " ".join("P{}:".format(k)+",".join(map(str,p))
                                 for k,p in enumerate(pegs)))

def find_smallest_disk(exclude=None):
    list_d_k = [(p[-1],k) for k,p in enumerate(pegs) if k!=exclude and p]
    if list_d_k:
        _,k = min(list_d_k) # smallest disk not covered by D1
        return k
    return None
\end{verbatim}

\subsection{Algorithm ODD, Section~\ref{sec:odd}}

\begin{verbatim}
def initialize_m_ary_odd(m,n):
    # n-tuple of entries from the set {0,1,...,m-1}
    global pegs, a, direction
    pegs = tuple([] for k in range(m))
    for i in range(n,0,-1):
        pegs[0].append(i)
    a = (n+1)*[0] # a[0] and direction[0] is wasted
    direction = (n+1)*[+1]

def Gray_code_m_ary_odd(m):
    visit()
    while True:
        for _ in range(m-1): # repeat m-1 times:
            move_disk(m,peg=a[1])
            visit()
        k = find_smallest_disk(exclude=a[1]) # smallest disk not covered by D1
        if k==None: return
        move_disk(m,peg=k)
        visit()

def move_disk(m,peg): # move topmost disk on peg
    disk = pegs[peg].pop()
    peg     += direction[disk]
    a[disk] += direction[disk]
    if peg==m-1: direction[disk] = -1
    elif peg==0: direction[disk] = +1
    pegs[peg].append(disk)

# run the program for a test:
m=3           
initialize_m_ary_odd(m,6)
Gray_code_m_ary_odd(m)
\end{verbatim}

\subsection{Algorithm ODD-COMPRESSED, Section~\ref{odd-compressed}}
\label{imple-odd-compressed}
This implementation uses the idea of a
 ``control disk'' $D_0$ mentioned at the end of
section~\ref{sec:odd}.
For uniformiy, we attach a direction also to
the dial positions $0$ and $m-1$,
namely the direction in with the next move will proceed
(in contrast to 
the convention~\eqref{eq:up-down} used in
 Section~\ref{odd-compressed}).
\begin{verbatim}
def initialize_m_ary_odd_compressed(m,n):
    # n-tuple of entries from the set {0,1,...,m-1}
    global pegs, a
    pegs = tuple([] for k in range(3))
    for i in range(n,0,-1):
        pegs[0].append((i,0,+1))
    a = (n+1)*[0] # a[0] is wasted

def Gray_code_m_ary_odd_compressed(m):
    visit()
    while True:
        for control_disk in (2,0):
            k = find_smallest_disk(exclude=control_disk)
            if k==None: return # All disks are on the same peg.
            move_disk_compressed(m,peg=k)
            visit()

def move_disk_compressed(m,peg):
    # retrieve topmost disk on peg:
    disk,value,direction = pegs[peg].pop()
    if value in (0,m-1) or value+direction in (0,m-1):
        peg += direction
    value += direction
    a[disk] += direction
    if value in (0,m-1):
        direction = -direction
    pegs[peg].append((disk,value,direction))

# run the program for a test:
m=5
initialize_m_ary_odd_compressed(m,5)
Gray_code_m_ary_odd_compressed(m)
\end{verbatim}

\vfill\pagebreak

\subsection{Algorithm EVEN, Section~\ref{even}}

\begin{verbatim}
def initialize_m_ary_even(m,n):
    initialize_m_ary_odd(m+1,n) # use m+1 pegs

def Gray_code_m_ary_even(m):
    peg_disk1 = 0 # position of disk D1
    visit()
    while True:
        for _ in range(m-1): # repeat m-1 times:
            turn_disk(m,peg=peg_disk1)
            peg_disk1 = (peg_disk1+1) % (m+1)
            visit()
        k = find_smallest_disk(exclude=peg_disk1)
             # smallest disk not covered by D1
        if k==None: return
        turn_disk(m,peg=k)
        visit()

def turn_disk(m,peg): # move the topmost disk on peg
    disk = pegs[peg].pop()
    if disk%2==1:
        peg = (peg + 1) % (m+1)
    else:
        peg = (peg - 1) % (m+1)
    pegs[peg].append(disk)
    a[disk] += direction[disk]
    if a[disk]==m-1: direction[disk] = -1
    elif a[disk]==0: direction[disk] = +1
            
# run the program for a test:
m=4
initialize_m_ary_even(m,6)
Gray_code_m_ary_even(m)
\end{verbatim}

\subsection{Truly loopless implementation of Algorithm EVEN, Section~\ref{simulation}}

A peg is a list of pairs $(a,b)$ with $a\le b$, denoting a maximal
subset $a,a+1,\ldots,b$ of consecutive disks (an interval).
The pairs are sorted, with the smallest disks at the end (at the
``top'').

\begin{verbatim}
def initialize_m_ary_even_intervals(m,n):
    global pegs,position
    initialize_m_ary_even(m,n)
    pegs[0][:]=[(1,n)]
    position=[0]*(n+1)

\end{verbatim}

\vskip 1pt plus 5cm
\goodbreak

\begin{verbatim}
def Gray_code_m_ary_even_intervals(m,n):
    visit()
    while True:
        for _ in range(m-1): # repeat m-1 times:
            turn_disk_intervals(m,peg=position[1])
            visit()
        d = find_smallest_missing_disk(position[1])
             # smallest disk not covered by D1
        if d>n: return
        turn_disk_intervals(m,peg=position[d])
        visit()

def turn_disk_intervals(m,peg): # move topmost disk on peg
    disk,d2 = pegs[peg][-1]
    # remove disk from peg:
    if disk==d2:
        pegs[peg].pop()
    else:
        pegs[peg][-1]=(disk+1,d2)
    if disk%2==1:
        peg = (peg + 1) % (m+1) # turn the disk "clockwise"

    else:
        peg = (peg - 1) % (m+1) # turn the disk "counterclockwise"
    position[disk]=peg
    # add disk to peg:
    if pegs[peg]:
        d1,d2 = pegs[peg][-1]
        if disk<d1-1:
            pegs[peg].append((disk,disk))
        else:
            pegs[peg][-1]=(disk,d2)
    else:
        pegs[peg].append((disk,disk))
    a[disk] += direction[disk]
    if a[disk]==m-1: direction[disk] = -1
    elif a[disk]==0: direction[disk] = +1

def find_smallest_missing_disk(peg):
    (_,d2)=pegs[peg][-1]
    return d2+1

# run the program for a test:
m=4
initialize_m_ary_even_intervals(m,6)
Gray_code_m_ary_even_intervals(m,6)
\end{verbatim}

\pagebreak

\subsection{Algorithm WORK-AHEAD, Section~\ref{sec:ahead}}

\begin{verbatim}
def STEP():
    global j, b,m,B,Q
    if b[j]==m[j]-1:
        b[j]=0
        j += 1
    else:
        if len(Q)<B:
            b[j] += 1
            Q.append(j)
            j = 1

def initialize_work_ahead(n):
    global a,b,direction,B,Q,j
    a = (n+1)*[0] # a[0], b[0], and direction[0] is wasted
    direction = (n+1)*[+1]
    b = (n+2)*[0]
    from collections import deque
    B = (n+1)//2
    Q = deque()
    j = 1

def Gray_code_work_ahead(n):
    while True:
        VISIT()
        STEP()
        STEP()
        j = Q.popleft()
        if j==n+1: break
        a[j] += direction[j]
        if a[j] in (0,m[j]-1): direction[j] *= -1

def VISIT():
    print ("".join(str(x) for x in reversed(a[1:])))
        
# run the program for a test:
n=4
m=[0]+[2,4,5,2]+[2] # initial 0 and final 2 are artificial
initialize_work_ahead(n)
Gray_code_work_ahead(n)
\end{verbatim}

\goodbreak
\vfill\pagebreak

\subsection{Algorithm WORK-AHEAD for the binary Gray code, Section~\ref{sec:ahead}}

\begin{verbatim}
def STEP_binary():
    global j, b,B,Q
    if b[j]==1:
        b[j]=0
        j += 1
    else:
        if len(Q)>=B:
            print ("error")
            exit(1)
        b[j]=1
        Q.append(j)
        j = 1

def initialize_work_ahead_binary(n):
    global a,b,B,Q,j
    a = (n+1)*[0] # a[0], b[0], and direction[0] is wasted
    b = (n+2)*[0]
    from collections import deque
    B = max(2,(n+2)//2)
    Q = deque()
    j = 1

def Gray_code_work_ahead_binary(n):
    while True:
        VISIT()
        STEP_binary()
        STEP_binary()
        j = Q.popleft()
        if j==n+1: break
        a[j] = 1-a[j]
        
# run the program for a test:
n=4
initialize_work_ahead_binary(n)
Gray_code_work_ahead_binary(n)
\end{verbatim}

\goodbreak
\vfill\pagebreak

\subsection{Algorithm WORK-AHEAD with the modification of Section~\ref{sec:alternative-STEP}}

\begin{verbatim}
def STEP_x():
    global j, b,m,B,Q,n
    if j==n+1: raise StopIteration
    if b[j]==m[j]-1:
        b[j]=0
        j += 1
    else:
        if len(Q)<B:
            b[j] += 1
            Q.append(j)
            j = 1

def initialize_work_ahead_x():
    global n,a,b,direction,B,Q,j
    a = (n+1)*[0] # a[0], b[0], and direction[0] is wasted
    direction = (n+1)*[+1]
    b = (n+1)*[0]
    from collections import deque
    B = (n+1)//2
    Q = deque()
    j = 1
    
def Gray_code_work_ahead_x():
    try:
        while True:
            VISIT()
            STEP_x()
            STEP_x()
            j = Q.popleft()
            a[j] += direction[j]
            if a[j] in (0,m[j]-1): direction[j] *= -1
    except StopIteration:
        return

# run the program for a test:
n=4
m=[0]+[2,4,5,2] # initial entry 0 is redundant
initialize_work_ahead_x()
Gray_code_work_ahead_x()
\end{verbatim}

\vfill\pagebreak
\subsection{General mixed-radix Gray code generation according to the recursive definition of Section~\ref{general}}

In order to have a reference implementation for comparing the results,
we give a program straight from the definition~\eqref{m-ary}
 of Section~\ref{general}, extended to arbitrary mixed radices $(m_1,\ldots,m_n)$.
When this program is run with the data specified below, it should
produce the same Gray codes as all the previous example programs
combined (apart from the additional state of the pegs that is reported
by these programs).
The outputs coincide precisely after
stripping everything after the first blank on each line.
The source files of this arXiv preprint include scripts that will
 extract the program code from the \LaTeX\ file of this appendix
(\texttt{extract-code-from-appendix.py})
and compare the outputs to check the results
(\texttt{check-output.sh}) after running the examples,
\begin{verbatim}
def mixed_Gray_code(ms):
    "a generator for the mixed-radix Gray code"
    if ms:
        m = ms[0]
        G1 = mixed_Gray_code(ms[1:])
        while True:
            g = next(G1)
            for lastdigit in range(0,m):
                yield g+(lastdigit,)
            g = next(G1)
            for lastdigit in reversed(range(0,m)):
                yield g+(lastdigit,)
    else:
        yield ()

for result in (
        mixed_Gray_code([3]*6),
        mixed_Gray_code([5]*5),
        mixed_Gray_code([4]*6),
        mixed_Gray_code([4]*6),
        mixed_Gray_code([2,4,5,2]),
        mixed_Gray_code([2]*4),
        mixed_Gray_code([2,4,5,2]),
        ):
    print ("##########")
    for g in result:
        print ("".join(str(x) for x in g))
\end{verbatim}





\end{document}